\newcommand{\N}{\Bbb N}
\newcommand{\R}{\Bbb R}
\newcommand{\X}{\frak X}
\newcommand{\bml}[1]{\begin{multline}\label{#1}}
\newcommand{\bee}{\begin{equation}}
\newcommand{\bed}{\begin{displaymath}}
\newcommand{\ee}{\end{equation}}
\newcommand{\bs}{\begin{split}}
\newcommand{\be}{\beta}
\newcommand{\ga}{\gamma} \newcommand{\Ga}{\Gamma}
 \newcommand{\De}{\Delta}
\newcommand{\la}{\lambda} \newcommand{\La}{\Lambda}
\newcommand{\si}{\sigma}
 \newcommand{\sm}{\setminus}
\newcommand{\w}{\widetilde}
\newcommand{\ov}{\overline}
\newtheorem{theorem}{Theorem}[section]
\newtheorem{lem}{Lemma}[section]
\newtheorem{prop}{Proposition}[section]
\newtheorem{remark}{Remark}[section]
\newtheorem{corollary}{Corollary}[section]
\theoremstyle{definition} \newtheorem{df}{Definition}[section]
\theoremstyle{remark} \newtheorem{rem}{Remark}[section]
 \numberwithin{equation}{section}
\begin{document}

\title{ On  quasi-continuous  approximation in classical  statistical mechanics}
\author{S.M.~Petrenko$^1$, A.L.~Rebenko$^2$, M.V.~Tertychnyi $^2$}
\date{}
\maketitle
\begin{footnotesize}
\begin{tabbing}
$^1$ \= Lvivs'kyi NLTU, Lviv, UKRAINE\\
\> {\em petrenko2003kiev@inbox.ru}\\
$^2$\= Institute of Mathematics, Ukrainian National Academy of Sciences, Kyiv, Ukraine \\
\>{\em rebenko@voliacable.com } ;{ \em maksym.tertychnyi@gmail.com}\\

\end{tabbing}
\end{footnotesize}
\begin{abstract}
A continuous infinite system of point particles with strong
superstable interaction is considered in the framework of
classical statistical mechanics. The family of approximated
correlation functions is determined  in such a way, that they take
into account only such configurations of particles in
$\mathbb{R}^d$ which for a given partition of the configuration
space $\mathbb{R}^d$ into nonintersecting hyper cubes with a
volume $a^d$ contain no more than one particle in every cube. We
prove that these functions converge to the proper correlation
functions of the initial system if the parameter of approximation
$a\rightarrow 0$ for any positive values of an inverse temperature
$\beta$ and a fugacity  $z$. This result is proven both for
two-body interaction potentials and for many-body case.
\end{abstract}

\thispagestyle{empty}

\noindent \textbf{Keywords :Classical statistical mechanics,
 strong superstable potential, many-body potential, correlation functions} \\

\noindent \textbf{Mathematics Subject Classification :}\, 82B05;
 82B21

\section{\bf Introduction}

 The procedure of lattice approximation
is very often used to study continuous systems. There is a
well-known example of the lattice approximation in Euclidean
quantum field theory for the model $\la:\phi^4:$ in the
two-dimensional space-time which transforms the system to Ising
model with unbounded continuous spin. In contrast to Euclidean
quantum field theory, where lattice systems play the role of
approximation, in statistical mechanics they represent part of the
Nature, such as ferromagnetics, quantum oscillators etc. The
theory of such systems is well developed, unlike continuous
systems such as dense gases and liquids. The main difficulties in
the mathematical description of continuous systems in statistical
mechanics are accumulation of many number of particles in small
volumes. To avoid this problems such systems as {\it lattice gas}
were invented to describe some general characteristics of real
gases. But in majority of works there was no parameter which in
some sense restored systems to continuous gases.

In this work we propose some intermediate approximation of
continuous gases, which is very close to lattice gases and all
main characteristics of continuous gases can be obtained with help
of limit transition.

 Quasi-continuous  approximation  of the
Equilibrium Classical Statistical Mechanics was proposed in the
article \cite{RT07} for the investigation of infinite systems of
interacting point particles with two-body strong superstable
potentials. The matter of this approximation is that in integrals
which are in the definitions of the main characteristics such as
partition function and correlation functions integrations are
realized over such configuration which for a given partition of
the configuration space $\mathbb{R}^d$ into nonintersecting hyper
cubes with a volume $a^d$ contain no more than one point in every
cube. Correlation functions and pressure of systems defined in
such a way though have a proper limit at $a\rightarrow 0$ even for
infinite volume systems if the interaction potential is
sufficiently singular at the origin, more exactly if the potential
is locally nonintegrable in any bounded region of $\mathbb{R}^d$
which contains an origin. This fact though is predictable from the
physical point of view but from mathematical point of view it is a
little bit unexpected as the Poisson measure (and Gibbs measure
too) of the set of such configurations is zero.

At the same time, such defined system can be approximated by the
 {\it lattice gas}, an investigation of which is
considerably simplified. This transition from continuous to
lattice systems and vice versa  is particularly important in the
investigation of critical behavior of infinite systems near phase
transition points.

 It was proved in the article \cite{RT07} that for any
positive values of temperature $T$ (or inverse temperature
$\be=1/kT$) and fugacity $z$ of infinite classical systems the
approximated pressure $p^{(-)}(z,\be;a)$, where $a$ is the
parameter of approximation, tends to the proper value of the
pressure $p(z,\be)$ of the considered statistical system as
$a\rightarrow 0$. In the article \cite{Pe08} this result was
generalized for the systems with many-body interactions. Later, in
the article \cite{RT09} the same result was obtained for  family
of the correlation functions, but only for sufficient small values
of fugacity $z$, the values of which were bounded by the radius of
convergence of the Kirkwood-Salsburg expansion for the correlation
functions.

 In this article we are going to generalize this result for the case
 of arbitrary positive values of fugacity $z$ and temperature $T$. Using an
 expansion in so-called {\it dense configurations}, which was
 proposed in \cite{Re98} for finite range interaction and
 in the article \cite{PR07} for infinite range potentials, we prove that the family of
 approximated correlation functions $\rho_\La^{(-)}(z,\be;a)$ of the finite volume
 ($\La\Subset\R^d$)  are uniformly bounded by
 a constant which does not depend on the parameter of approximation
 $a$ and volume $\La$ and have pointwise limit $\rho(z,\be)$
as $\La\uparrow\R^d$ and $a\rightarrow 0$ for arbitrary values of
fugacity $z$ and temperature $T$. This result will be proved both
for two-body interaction potentials and for many-body potentials
of general superstable type.

\section{\bf Configuration spaces}

\subsection{\bf The main configuration spaces}
 Let ${{\Bbb R}}^{d}$ be a $d$-dimensional Euclidean
space.  The set of positions $\{x_i\}_{i\in{\Bbb N}} $ of
identical particles is considered to be  a locally finite subset
in ${\Bbb R}^d$ and the set of all such subsets creates the
configuration space:
\begin{equation*}
\Gamma=\Gamma_{{\Bbb R}^{d}} :=\left\{ \left. \gamma \subset
{{\Bbb R}}^{d}\right| \,|\gamma \cap \Lambda
|<\infty,\,\,\mathrm{for}\,\,\mathrm{all}\; \Lambda \in
\mathcal{B}_c({{\Bbb R}}^{d})\right\},
\end{equation*}
where $|A|$ denotes the cardinality of the set $A$ and
$\mathcal{B}_{c}({{\Bbb R}}^{d})$ denote the systems of all
bounded Borel sets in ${\Bbb R}^d$. We also need to define the
space of finite configurations $\Gamma_{0}$:
\[
\Gamma_{0} = \bigsqcup_{n\in {\Bbb N}_0}\Gamma^{(n)},\quad
\Gamma^{(n)}:= \{\eta\subset{\Bbb R}^{d}\; |\; |\eta|=n,\;n\in
{\Bbb N}_0 \},\quad {\Bbb N}_0={\Bbb N}\cup \{0\}.
\]

 For every $\Lambda \in \mathcal{B}_c({{\Bbb R}}^{d} )$ one
can define a mapping $N_\Lambda :\Gamma\rightarrow \Bbb{N}_0$ of
the form
\[
N_\Lambda (\eta ):=|\eta \cap \Lambda |.
\]
 The Borel $\sigma
$-algebra $ {\frak B}(\Gamma)$ is equal to $\sigma (N_\Lambda
\left| \Lambda \in \mathcal{B}_c({{\Bbb R}}^{d})\right.)$.
 See \cite{Le75I},
\cite{Le75II} for details.

We need also to define

\begin{equation*}
\Ga_\La:=\left\{ \left. \gamma \in\Ga_0 \right| \,\gamma \subset
\Lambda ,\,\,\; \Lambda \in
\mathcal{B}_c({\mathbb{R}}^{d})\right\},
\end{equation*}

 By ${\frak B}(\Gamma_\Lambda)$ we denote the corresponding
$\sigma$-algebras on $\Gamma_\Lambda$ and $\Gamma_{0,\Lambda}$.

\subsection{\bf Lebesgue-Poisson measure }

Let $\sigma$  be Lebesgue measure on $\mathcal{B}({\Bbb R}^d)$ and
for any $n\in \Bbb{N}$ the product measure $\sigma ^{\otimes n}$
can be considered as a measure on
$$\widetilde{({{\Bbb R}}^{d})^n}=
\left\{ \left. (x_1,\ldots ,x_n)\in ({{\Bbb R}}^{d})^n\right| \,x_k\neq x_l\,\,\mathrm{if}%
\,\,k\neq l\right\}$$ and hence as a measure $\sigma ^{(n)}$ on
$\Gamma^{(n)}$ through the map
$$\:sym_n:\widetilde{({\Bbb R}^{d})^{n}}\ni
(x_1,...,x_n)\mapsto\{x_1,...,x_n\}\in\Gamma^{(n)}.$$

Define the Lebesgue-Poisson measure $\lambda_{z \sigma}$ on
${\frak B}(\Gamma_{0})$ by the formula:
\begin{equation}\label{1}%(2.1)
 \lambda_{z\sigma} :=\sum_{n\ge 0}\frac{z^n}{n!}\sigma^{(n)}.
\end{equation}

The restriction of $\lambda_\sigma$ to ${\frak B}(\Gamma_\Lambda)$
we also denote by $\lambda_\sigma$. For more detailed structure
and analysis of the configuration spaces $\Gamma$, $\Gamma_0$,
$\Gamma_\Lambda$  see \cite{AKR97}.

\subsection{\bf Partition of $\R^d$ }

 Following Ruelle \cite{Ru70} define the partition of the Euclidean
space $\R^d$ into elementary cubs. Let $a>0$ be arbitrary. For
each $r\in{\Bbb Z}^{d}$ we define an elementary cube with an edge
$a$ and a center $a r$:
\begin{equation}\label{2-1}%(2.2)
\Delta_{a}(r):=\{x\in{\Bbb R}^d\mid a(r^i-1/2)\leq
x^i<a(r^i+1/2)\}.
\end{equation}
 We will  write $\Delta$ instead of $\Delta_{a}(r)$,
if a cube $\Delta$ is considered to be arbitrary and there is no
reason to emphasize that it is centered at the concrete point
$ar$. Let $\overline\Delta_a$ be the partition of $\mathbb{R}^d$
into cubes $\Delta_{a}(r)$. Define, also,  the notion of {\it
compatible partitions}.
\begin{df}\label{d:0}
Two partitions $\overline\Delta_a$ and $\overline\Delta_{a'}$ with
$a'<a$ are compatible if $a/a'\in \N$ and partition
$\overline\Delta_a$ can be obtained from the partition
$\overline\Delta_{a'}$ removing all edges of its cubes which do
not lie on the edges of the partition $\overline\Delta_a$.
\end{df}

To avoid some confusion we work in this article only with
compatible partitions.

\subsection{\bf Additional configuration spaces }

Define two additional configuration spaces:
$\Gamma_{\Lambda}^{dil}$ we call a space of {\it dilute}
configurations and $\Gamma_{\Lambda}^{den}$  a space of {\it dense
} configurations.

 Without any restriction
of general case, we consider only that $\Lambda \in
\mathcal{B}_c({{\Bbb R}}^{d})$ which is union of cubes
$\Delta_{a}(r)$ with some fixed $a$, which depends on the
interaction potential.  In the cases where this  particular
partition will be important we denote by $\La(a)$ the union of
such cubs. Then

\begin{equation}\label{2}%(2.3)
\Gamma^{dil}_{\Lambda} :=\left\{ \left. \gamma \in\Gamma_{\Lambda}
\right| \,|\gamma_\Delta|=0 \vee 1 \; \text{for all}\;
\Delta\subset\Lambda \right\}
\end{equation}

and

\begin{equation}\label{3}%(2.4)
\Gamma^{den}_{\Lambda} :=\left\{ \left. \gamma \in\Gamma_{\Lambda}
\right| \,|\gamma_\Delta|\geq2 \; \text{for all}\;
\Delta\subset\Lambda \right\}.
\end{equation}
For any $\Delta\in\overline\Delta_a$ and any fixed configuration
$\eta\in\Gamma_{\Lambda}$ we split the space of {\it dense }
configurations $\Gamma^{den}_{\Delta}$  into two subspaces:
\begin{equation}\label{dec1}%(2.5)
\Gamma_{\Delta}^{(>)}(\eta) =  \Gamma_\Delta^{(>)} :=\left\{
\left. \gamma \in\Gamma^{den}_{\Delta} \right| \, |\gamma|>
d_{\eta}^{\varepsilon}(\Delta)\right\}
\end{equation}
and
\begin{equation}\label{dec2}%(2.6)
\Gamma_\Delta^{(<)}(\eta) = \Gamma_\Delta^{(<)} :=\left\{ \left.
\gamma \in\Gamma^{den}_{\Delta} \right| \, |\gamma |\leq
d_{\eta}^{\varepsilon}(\Delta)\right\},
\end{equation}

where $\Delta\equiv\Delta_{a}(r), \,0<\varepsilon\leq1 $ and
\begin{equation}\label{dec3}%(2.7)
d_{\eta}(\Delta)=dist(\eta,\mathbf{\Delta}),
d_{\eta}^{\varepsilon}(\Delta)=(d_{\eta}(\Delta))^\varepsilon,
\end{equation}
where $\mathbf{\Delta}$ is the closure of the cube $\Delta$. It's
obviously that
$\Gamma_{\Delta}^{den}=\Gamma_\Delta^{(>)}\cup\Gamma_\Delta^{(<)}$.
And finally for $X_k=\cup_{i=1}^k\De_a(r_i)$
\begin{equation}\label{dec4}%(2.8)
\Gamma_{X_k}^{(>)}(\eta) =  \Gamma_{X_k}^{(>)} :=\left\{ \gamma
\subset X_k\mid \, |\gamma_\De|> d_{\eta}^{\varepsilon}(\Delta)\;
\text{for all}\; \Delta\subset X_k \right\}
\end{equation}
and
\begin{equation}\label{dec5}%(2.9)
\Gamma_{X_k}^{(<)}(\eta) =  \Gamma_{X_k}^{(<)} :=\left\{ \gamma
\subset X_k\mid  \, |\gamma_\De|\leq
d_{\eta}^{\varepsilon}(\Delta)\; \text{for all}\; \Delta\subset
X_k \right\}.
\end{equation}

\section{\bf Interaction}

For the general case  interaction between particles is realized by
infinite sequence of interaction potentials:
\begin{equation} \label{ch1-Many-Body_potential}%(3.1)
V=(0,0,V_2(x_1,x_2),V_3(x_1,x_2,x_3),...,V_p(x_1,...,x_p),...)
\end{equation}
In case of two-body interaction, which is the most popular among
physicists  components of the sequence
\eqref{ch1-Many-Body_potential} look like:
\begin{equation}\label{V-2}%(3.2)
V_2(x_1,x_2)=\phi(|x_1-x_2|),\;V_p\equiv 0,\;p\geq 3,
\end{equation}

The energy of any configuration $\gamma\in\Gamma_0$ is defined by
the following formula:
\begin{equation}\label{U}%(3.3)
U(\gamma)=U_V(\gamma)=\sum_{p=2}^{|\gamma|}\sum_{\{x_1,...,x_p\}\subset
\gamma}V_{p}(x_1,...,x_p)=\sum_{\eta\subseteq \gamma :|\eta|\geq
2}V(\eta),
\end{equation}
and interaction energy between two configurations
$\eta,\;\ga\in\Ga_0$ by
\begin{align}\label{EnergyFunctional}%(3.4)
W(\eta;\gamma)=W_V(\eta;\gamma)\;&=\;U(\eta\cup\gamma)-U(\eta)-U(\gamma)=\\
&=\sum_{p=2}^{|\eta\cup\ga|}\sum_{\substack{i,j=1 \\
i+j=p}}^{|\eta|,|\ga|}\sum_{\substack{\{x_1,...,x_i\}\subset\eta \\
\{y_1,...,y_j\}\subset\gamma}}V_p(x_1,...,x_i,y_1,...,y_j).\notag
\end{align}
The correspondent formulas for two-body interaction are:
\begin{equation}\label{UP}%(3.5)
U(\gamma)=U_\phi(\gamma)=\sum_{\{x_1,x_2\}\subset
\gamma}\phi(|x_1-x_2|),
\end{equation}
\begin{equation}\label{EnergyFunctional-V-2}%(3.6)
W(\eta;\gamma)\;=\;W_\phi(\eta;\gamma)\;=\; \sum_{\substack{x\in\eta \\
y\in\gamma}}\phi(|x-y|).
\end{equation}

We introduce 3 kinds of interactions, which will be used in this
article:
\begin{df}\label{d:1}
Interaction  $U$ is called: \\
a)\;stable {\bf (S)}, if there exists $B$>0 such that:\\
\begin{equation}\label{7}%(3.7)
U(\gamma)\geq-B|\gamma|,\; \text{for any\;}\gamma \in \Gamma_{0};
\end{equation}
b)\;superstable {\bf (SS)}, if there exist $A>0, \,B\geq0 $
and partition  $\overline{\Delta_{a}}$ such that:\\
\begin{equation}\label{8}%(3.8)
U(\gamma)\geq A \underset{\Delta\in \overline{\Delta_{a}} }
{\sum}\; |\gamma_{\Delta}|^2 - B|\gamma|,\; \text{for any\;}\gamma
\in \Gamma_{0};
\end{equation}
c)strong superstable {\bf (SSS)}, if there exist  $m \geq 2$,
$a_0>0$ s.t. for any $0< a \leq a_0$ there exist $A(a)>0$,
$B(a)\geq 0$ s.t.
\begin{equation} \label{9}%(3.9)
U(\gamma)\geq A(a) \underset{\Delta\in
\overline{\Delta}_{a}:|\ga_\De|\geq 2 } {\sum}\;
|\gamma_{\Delta}|^m - B(a)|\gamma|,\; \text{for any\;}\;\;\gamma
\in \Gamma_{0}.
\end{equation}
\end{df}
 In accordance
with these definitions   there is a  problem to describe
conditions on   potentials, which ensure stability, superstability
or strong superstability of an infinite statistical system. This
problem has a long story. A short review of this problem and some
new results one can find in  \cite{RT08} and \cite{Te08}.

\begin{rem}
 It is clear   that if the equation \eqref{8} holds for some
 partition $\overline{\Delta_{a}}$ with the constants
  $A$ and $B$ then it holds  with the same constants $A$ and $B$
   for any partition $\overline{\Delta_{a'}}$
  for which $a'<a$ and they are compatible.
\end{rem}

\begin{rem}
 It is clear that if the potential is strong superstable then it is
simply superstable with $A=A(a_0),\;B=B(a_0)$.
\end{rem}

\subsection{\bf Definition of the system with two-body interaction}

{\bf (A): Assumption on the interaction potential.}
\textit{Consider a general type of potentials \,$\phi$,\, which
are continuous on $\mathbb{R_+}\sm\{0\}$  and for which there
exist \;\,$ r_0 > 0,\;\,R\;
>\; r_0, \\
\varphi_{0}>0,\,\varphi_{1}>0,\, \text{and}\;\, \varepsilon_0
> 0$\, such that:}
\begin{align}
&1)\, \phi(|x|)\equiv -\phi^-(|x|)\geq -
\frac{\varphi_{1}}{|x|^{d+\varepsilon_0}}\;\;\;
  \text{for}\;\;\; |x|\geq R,\label{211};\\%(3.10)
&2)\, \phi(|x|)\equiv\phi^+(|x|)\geq
\frac{\varphi_{0}}{|x|^{s}},\, s\geq d\;\;\;
\text{for}\;\;\; |x|\leq r_0,\label{212}%(3.11)
\end{align}
\textit{where}
\begin{equation}\label{213} %(3.12)
\phi^+(|x|):= \max \{0, \phi(|x|)\}, \, \phi^-(|x|):=-\min \{0,
\phi(|x|)\}.
\end{equation}
Note that  in the definition 3.1, c)(SSS) the constant $a_0\leq
r_0$. For the interaction potentials which satisfy the assumptions
{\bf (A)} define two important characteristics (for any
$\Delta\in\overline\Delta_a$ with $a \leq a_0$ ):
\begin{align}\label{219}%(3.13)
&1)\quad
\upsilon_\varepsilon(a):=\sum_{\Delta^{'}\in\overline\Delta}
\,\sup_{x\in\Delta}\,\sup_{y\in\Delta^{'}}\phi^-(|x-y|){|x-y|}^\varepsilon,\,\,\text
{ for any }\varepsilon <\varepsilon_0;\\
&2)\quad b(a):=\inf_{\{x,y\}\subset\Delta}\phi^+(|x-y|).\label{220}%(3.14)
\end{align}
Due to the translation invariance of the 2-body potential the
values $\upsilon_0 $ and $b$ do not depend on the position of
$\Delta$. The following statement is true.

\begin{prop}\label{prop2.1}
Let potential $\phi$ satisfy the assumption {\bf (A)}. Then the
interaction is strong superstable and the energy $U$ satisfies the
inequality \eqref{9} with some $0<a_0<r_0$ and if $s>d$ then
\begin{equation}\label{218}%(3.15)
m=2,\;\; A(a)\;=\;\frac{b(a)-2\upsilon_0(a)}{4} >
0,\;\;B(a)\;=\;\frac{\upsilon_0(a)}{2}
\end{equation}
for $a\leq a_0$.
\end{prop}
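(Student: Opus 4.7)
The plan is to split the pair sum defining $U(\gamma)$ into an intra-cube part (both endpoints of the pair in one cube $\De$) and an inter-cube part (endpoints in distinct cubes), estimate each separately using $\phi^+$ and $\phi^-$ respectively, and combine. I would first fix $a_0>0$ small enough that two conditions hold: (i) $a_0\sqrt d\le r_0$, so that for every $a\le a_0$ each cube of $\ov\De_a$ has diameter at most $r_0$, placing every intra-cube distance in the purely repulsive regime where $\phi=\phi^+\ge \varphi_0/|x|^s$; and (ii) $b(a_0)>2\upsilon_0(a_0)$, which makes the announced $A(a)$ positive. Condition (ii) follows from a scaling check: $b(a)\ge \varphi_0(a\sqrt d)^{-s}$ grows like $a^{-s}$, whereas splitting $\upsilon_0(a)$ into the finitely many cubes $\De'$ with $\dist(\De,\De')<R$ (over which $\phi^-$ is bounded, by continuity of $\phi$ on $[r_0,R]$) and the complementary tail (where $\phi^-(|x|)\le \varphi_1/|x|^{d+\varepsilon_0}$ yields a convergent lattice sum) shows $\upsilon_0(a)=O(a^{-d})$ as $a\to 0$; hence $b(a)/\upsilon_0(a)\to\infty$ under the hypothesis $s>d$.

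For the intra-cube piece, by definition of $b(a)$ the contribution is at least $b(a)\sum_\De \binom{|\gamma_\De|}{2}$. Using the elementary inequality $\binom N 2\ge N^2/4$, valid precisely for integers $N\ge 2$, this is bounded below by $\tfrac{b(a)}{4}\sum_{\De:|\gamma_\De|\ge 2}|\gamma_\De|^2$. For the inter-cube piece, using $\phi\ge -\phi^-$ and writing $c_{\De,\De'}:=\sup_{x\in\De,y\in\De'}\phi^-(|x-y|)$, the contribution is at least $-\tfrac12\sum_{\De\ne\De'}|\gamma_\De||\gamma_{\De'}|c_{\De,\De'}$. The AM-GM bound $|\gamma_\De||\gamma_{\De'}|\le(|\gamma_\De|^2+|\gamma_{\De'}|^2)/2$, together with the symmetry $c_{\De,\De'}=c_{\De',\De}$ and translation invariance $\sum_{\De'\ne\De}c_{\De,\De'}\le\upsilon_0(a)$, reduces this to $-\tfrac{\upsilon_0(a)}{2}\sum_\De|\gamma_\De|^2$. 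Splitting by occupancy, $\sum_\De|\gamma_\De|^2=\sum_{\De:|\gamma_\De|\ge 2}|\gamma_\De|^2+|\{\De:|\gamma_\De|=1\}|$, and using that the number of singleton cubes is $\le|\gamma|$, gives the inter-cube lower bound $-\tfrac{\upsilon_0(a)}{2}\sum_{\De:|\gamma_\De|\ge 2}|\gamma_\De|^2-\tfrac{\upsilon_0(a)}{2}|\gamma|$.

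Adding the two bounds produces exactly \eqref{9} with $m=2$, $A(a)=(b(a)-2\upsilon_0(a))/4$, and $B(a)=\upsilon_0(a)/2$. The only genuinely nontrivial step is the scaling analysis guaranteeing a choice of $a_0$ for which $A(a)>0$: both $b$ and $\upsilon_0$ blow up as $a\to 0$, and one must verify that $b$ wins, which is exactly where the sharp assumption $s>d$ enters. The rest of the argument is bookkeeping with the definitions of $b(a)$ and $\upsilon_0(a)$ and the combinatorial identity $\binom N2\ge N^2/4$.
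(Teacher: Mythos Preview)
Your proposal is correct and follows essentially the same approach as the one the paper invokes (the proof is deferred to \cite{RT09}, but the computation is spelled out in the paper itself in the proof of Proposition~\ref{prop2.2}: equation~\eqref{21-4} is precisely your intra/inter-cube split, AM--GM bound, and occupancy splitting, applied to $\phi_\delta^{st}$ rather than $\phi$). The scaling asymptotics you use to secure $A(a)>0$ are exactly the paper's~\eqref{21-2}.
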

See the proof  in \cite{RT09}. More powerful result was obtained
in the article \cite{RT08}, but for our goals it is sufficient to
apply  the inequalities \eqref{218}.

Following \cite{PR07} we introduce the following notations, which
will be used in our future estimates:
\begin{align}%(3.16)-(3.17)
&\phi_{\delta}^+ (|x|):=(1-\delta)\,\phi^+(|x|),\,\,\,\,\,
U_{\delta}^+:=U_{\phi_{\sigma}^+},\label{13}\\
&\phi_\delta^{st}:=\delta\phi^+(|x|)-\phi^-(|x|),\,\,\,\,\,U_\delta^{st}:=U_{\phi_\delta^{st}},\,\,\,\delta\in(0,1).\label{14}
\end{align}
One can deduce from \eqref{13}, \eqref{14}, that:
\begin{equation}\label{15}%(3.18)
\phi(|x|)\,=\,\phi_{\delta}^+(|x|)+\phi_{\delta}^{st}(|x|),\,\,\,\,U(\gamma)\,=\,U_{\delta}^+
(\gamma)+U_{\delta}^{st} (\gamma).
\end{equation}

\begin{prop}\label{prop2.2}
Let potential $\phi$ satisfy the assumption {\bf (A)}. Then there
exist   $0<a_*<r_0$ such that for any constant $ \delta\in(0,1/2)
$
\begin{equation}\label{21}%(3.19)
(1-\delta) b(a) > 2\upsilon_0(a),\;\;\text{for}\;\; a\leq a_*
\end{equation}
and the potential $ \phi_\delta^{st}$  is stable: $
U_\delta^{st}:=U_{\phi_\delta^{st}}(\gamma)\,\geq\,-B_\delta
|\gamma|\,,\,\,\,\gamma\in\Gamma_0 $ with
\begin{equation}\label{21-1}%(3.20)
B_\delta\;=\;\frac{1}{2}\upsilon_0(a_*)\;=\;\frac{\delta}{4}b(a_*).
\end{equation}
\end{prop}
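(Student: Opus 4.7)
The plan is to apply Proposition \ref{prop2.1} to the modified potential $\phi_\delta^{st} = \delta\phi^+ - \phi^-$, choosing the partition parameter $a_*=a_*(\delta)\in(0,r_0)$ so that the strong-superstability coefficient for $\phi_\delta^{st}$ at scale $a_*$ vanishes, i.e., $\delta b(a_*) = 2\upsilon_0(a_*)$. A single such choice of $a_*$ then both implies \eqref{21} and reconciles the two formulas for $B_\delta$.

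First I would establish that $b(a)/\upsilon_0(a)\to\infty$ as $a\to 0^+$. The singular lower bound $\phi^+(|x|)\geq\varphi_0/|x|^s$ on $|x|\leq r_0$, together with the diameter bound $|x-y|\leq a\sqrt d$ for $x,y\in\Delta$, gives $b(a)\geq\varphi_0/(a\sqrt d)^s$. For the attractive part, $\phi^-$ vanishes on $[0,r_0]$, is continuous (hence bounded) on $[r_0,R]$, and satisfies $\phi^-(|x|)\leq\varphi_1/|x|^{d+\varepsilon_0}$ on $|x|\geq R$; together these make $\phi^-$ integrable on $\R^d$, and a Riemann-sum comparison gives $\upsilon_0(a)=O(a^{-d})$. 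Since $s>d$ is in force (as in Proposition \ref{prop2.1}), the ratio $b(a)/\upsilon_0(a)$ is of order $a^{d-s}$ and diverges.

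Given $\delta\in(0,1/2)$, by continuity of $b(\cdot)/\upsilon_0(\cdot)$ and its divergence at $0$ I can pick $a_*(\delta)\in(0,r_0)$ with $b(a)/\upsilon_0(a)\geq 2/\delta$ for every $a\in(0,a_*]$ and equality at $a=a_*$. For any such $a$ the bound $\delta b(a)\geq 2\upsilon_0(a)$ together with $1-\delta>\delta$ yields
\[
(1-\delta)\,b(a)\;>\;\delta\,b(a)\;\geq\;2\,\upsilon_0(a),
\]
which is \eqref{21}.

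Finally, $\phi_\delta^{st}$ inherits assumption \textbf{(A)} with $\varphi_0$ replaced by $\delta\varphi_0$ and the same attractive tail, so its characteristics are $b_{st}(a)=\delta b(a)$ and $\upsilon_0^{st}(a)=\upsilon_0(a)$. Applying Proposition \ref{prop2.1} to $\phi_\delta^{st}$ at scale $a_*$ gives
\[
U_{\phi_\delta^{st}}(\gamma)\;\geq\;\frac{\delta b(a_*)-2\upsilon_0(a_*)}{4}\sum_{\Delta\in\overline\Delta_{a_*}}|\gamma_\Delta|^2\;-\;\frac{\upsilon_0(a_*)}{2}\,|\gamma|.
\]
By the choice of $a_*$ the first coefficient vanishes; dropping the non-negative quadratic sum yields the stability $U_{\phi_\delta^{st}}(\gamma)\geq -B_\delta|\gamma|$ with $B_\delta=\upsilon_0(a_*)/2 = \delta b(a_*)/4$ as claimed. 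The main technical obstacle is the first step — comparing the asymptotics $b(a)\sim a^{-s}$ with $\upsilon_0(a)=O(a^{-d})$ via the Riemann-sum estimate — which is precisely where the strict inequality $s>d$ is essential; everything after that is a straightforward rescaling of Proposition \ref{prop2.1}.
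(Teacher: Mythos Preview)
Your proposal is correct and follows essentially the same route as the paper: both determine $a_*=a_*(\delta)$ as the root of $\delta\,b(a)/4-\upsilon_0(a)/2=0$, justify its existence via the asymptotics $b(a)\sim\varphi_0/a^{s}$ and $\upsilon_0(a)\sim \phi_0/a^{d}$, and then read off stability of $\phi_\delta^{st}$ from the strong-superstability inequality (the paper cites the underlying calculation from \cite{RT09}, whereas you invoke Proposition~\ref{prop2.1} applied to $\phi_\delta^{st}$, which is the same computation). Your use of $1-\delta>\delta$ to derive \eqref{21} is in fact cleaner than the paper's phrasing at that step.
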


{\it Proof.} The inequality \eqref{21} follows from the assumption
{\bf (A)} and the definitions \eqref{219} and \eqref{220} as for
small $a$ they behave as:
\begin{equation}\label{21-2}%(3.21)
 b(a)\sim
 \frac{\varphi_0}{a^s}\;\;\text{and}\;\;\upsilon_\varepsilon(a)\sim\frac{\phi_\varepsilon}{a^d},
\end{equation}
and for $s>d$ we can choose sufficiently small $a=a_*$ or
$\varphi_0>>\phi_\varepsilon$ for $s=d$, where
\begin{equation}\label{21-3}%(3.22)
 \phi_\varepsilon\;=\;\int_{\R^d}\varphi^-(|x|)|x|^\varepsilon dx.
\end{equation}
As in \cite{RT09} (see Proposition 2.1) one can calculate that
\begin{equation}\label{21-4}%(3.23)
U_{\varphi_\delta^{st}}(\gamma)\;\geq\;\sum_{\Delta\in\,\overline\Delta_a:|\ga_\De|\geq
2} |\gamma_\Delta|^2\left(\delta\frac{
b(a)}{4}-\frac{\upsilon_0(a)}{2}\right)-\frac{\upsilon_0(a)}{2}|\gamma|.
\end{equation}
Let us chose  $a_*$ as a root of equation
\begin{equation}\label{21-5}%(3.24)
\delta\frac{ b(a)}{4}-\frac{\upsilon_0(a)}{2}=0.
\end{equation}
Then to satisfy \eqref{21} we have to choose $\delta >1/2$ and the
constant $B_\delta$ in \eqref{21-1}  can be expressed in terms of
parameters of the interaction potential $\varphi_0,\;\phi_0,\;s$
and dimension of the space $d$ (see Proposition 2.2 in
\cite{RT09}).
 \hfill $\blacksquare$

\subsection{\bf Definition of the system with many-body interaction}

In this section we consider a general type of many-body
interaction specified by a family of $p$-body potentials
$V_{p}:\R^{dp}\rightarrow\R,\:p\geq 2$. About the family of
potentials $V:=\{V_{p}\}_{p\geq 2}$ we will assume:

{\bf A1. Continuity.}
\begin{displaymath}
V_p\in C(\widetilde{(\R^{d})^{p}}),\;p\geq2,
\end{displaymath}
 {\it where}
\[
\widetilde{({\R}^{d})^{\otimes n}}= \left\{ \left. (x_1,\ldots
,x_n)\in ({\R}^{d})^{\otimes n}\right| \,x_k\neq x_l\,\,\text
{при} \,\,k\neq l\right\}.
\]

\vspace{0,5cm}

{\bf A2. Symmetry.} For any $p\geq2$, any
$(x)_p)=(x_1,...,x_p)\in(\R^{d})^{p}$, and any permutation $\pi$
of numbers $\{1,\ldots,p\}$
\begin{displaymath}
V_p(x_1,...,x_p)=  V_p(x_{\pi(1)},...,x_{\pi(p)}).
\end{displaymath}

{\bf A3. Translation invariance.} For any $p\geq2$, any
$(x_1,...,x_p)\in(\R^{d})^{p}$, and any $x_0\in\R^d$
\begin{displaymath}
V_p(x_1,...,x_p)=  V_p(x_1+x_0,...,x_p+x_0).
\end{displaymath}

{\bf A4. Superstability.} For any $p\geq2$ the potentials $V_{p}$
can be represented as
\begin{align}\label{V-pV-st}%(3.25)
&V_p\;=\;  \w{V}_p^+
+V_p^{(st)},\;\;\;V_p^{(st)}\;=\;\overline{V}^+_p\;+\;V_p^-,\\
&\w{V}^+\;:=\;(\w{V}^+_p)_{p\geq
2}\;\;\;V^{(st)}\;:=\;(V^{(st)}_p)_{p\geq 2},\notag
\end{align}

where $\w{V}_p^+\;+\;\overline{V}^+_p=V_p^+$, $V_p^{\pm}$ are
defined in the same way as in \eqref{213} and $V_p^{(st)},\;p\geq
2$ provides the stability of the corresponding energy $U$ , i.e.
there exists a constant $B\geq 0$ such that for any configuration
$\eta\in\Ga_{0}$
\begin{align}\label{(S)}%(3.26)
U_{V^{(st)}}(\eta)\geq -B|\eta|.
\end{align}

The corresponding decomposition for the energy:
\begin{equation}\label{63}%(3.27)
U(\gamma)\,=\,U^+ (\gamma) + U^{st} (\gamma).
\end{equation}
Sufficient conditions on the potentials $V_{p}$ providing
superstability inequality  were obtained in \cite{Te08}.

In the article \cite{Ru70} uniform (in volumes $\La_n$) bounds for
the family of correlation functions were obtained for potentials
which guarantee {\it superstability }(SS) and {\it low regularity
condition }(LR) (see \cite{Ru70}). For 2-body potentials which
satisfy the assumptions {\bf (A)}  both of these conditions are
fulfilled. But for many-body potentials which are not positive for
$p\geq 3$  LR-condition is not satisfied. So, as in the articles
\cite{KR04} and \cite{PR09}we formulate so called {\it
attraction-repulsion relations}(instead of LR-condition ) which
gives a possibility to obtain uniform bounds.

To formulate these assumption for potentials $V_p$, consider some
auxiliary constructions.
 Let $p\geq2$ and $N\in\mathbb{N}$.
For any union $X_{N}:=\cup_{j=1}^{N}\De_{j}$ of cubes $\De$ from
the partition $\overline\De_a$ (див. \eqref{2-1}) and any
 $\varepsilon\geq 0$ define  values:

\begin{equation}\label{64}%(3.28)
I_{p}^{k_1,...,k_N}\left(\De_1;...;\De_N\right):=
\sup_{\stackrel{x_{i_1}^{(1)}\in\De_1,...,
x_{i_N}^{(N)}\in\De_N}{i_1=\overline{1,k_1},...,i_N=\overline{1,k_N}}}
V_p^{-}(x_1^{(1)},...,x_{k_N}^{(N)}),
\end{equation}
where $k_1+\cdots+k_N=p, k_j\geq 1, j=\overline{1,N}$ and
\begin{align}\label{h9}%(3.29)
&I_{p}^{k_1,...,k_M|\bar{k}}\left(\De_1;...;\De_M|\varepsilon;(\De)_\pi\right):=\\
&=\sum_{\De'_{1},...,\De'_{\bar{k}}\in\overline{\De}_a}
I_{p}^{k_1,...,k_M,1,1,...,1}\left(\De_1;...;\De_M;\De'_1;...;\De'_{\bar{k}}\right)
\prod_{i=1}^{\bar{k}}\left(1+d^{\varepsilon}_{\De'_i,\De_{\pi(i)}}\right),\notag
\end{align}
where
$d^{\varepsilon}_{\De'_i,\De_{\pi(i)}}=\left(dist({\mathbf\De}'_i,{\mathbf\De}_{\pi(i)})\right)^\varepsilon$,
$\pi$ is the mapping of indices  $\{1,...,\bar{k}\}$ into the set
of indices $\{1,...,M\}$,
$(\De)_\pi\,:=\,\{\De_{\pi(1)},...,\De_{\pi(\bar{k})}\}$\; і\;
$k_1+...+k_M+\overline k=p$. The distance between cubes is the
distance between their closures.

Note that because of translation invariance of interaction
potentials  for $M=1$ all indices $\pi(i)=1$ and

\begin{equation}\label{h9-1}%(3.30)
I_{p}^{k_1|\bar{k}}\left(\De_1|\varepsilon;\De_1\right)\;=
\;I_{p}^{k_1|\bar{k}}\left(a;\varepsilon\right),
\end{equation}
i.e.  it depends on the size of cube $\De_1$, but it does not
depend on positions of $\De_1$. For a   positive part
 $\w{V}_p^+$ of interaction potentials define the following values:

\begin{equation}\label{f3}%(3.31)
v_p^{k_1,\ldots,k_N}\Big(\De_1,...,\De_N\Big):=\inf_{\stackrel{x_{i_1}^{(1)}\in\De_1,...,
x_{i_N}^{(N)}\in\De_N}{i_1=\overline{1,k_1},...,i_N=\overline{1,k_N}}}
\w{V}_p^{+}(x^{(1)}_1,...,x^{(N)}_{k_N}).
\end{equation}

\vspace{0,5cm}

 ${\bf A5}.${\bf Attraction-repulsion relations.} {\it There exist
$a_0>0\:$, such that for any  $N\in\mathbb{N}$, any set
$X_{N}:=\cup_{j=1}^{N}\De_{j}, \Delta_j\in \overline\Delta_{a}$
with $a\leq a_0$ the following inequalities are true:

(i)  for any $\De\in\overline{\De}_{a}$ and any $p\geq2$
\begin{equation}\label{v>0}%(3.32)
V_p(x_1,...,x_p)\geq0, \,\,\,\,\text{if}\;\;\; \{x_1,...,x_p\}
\subset \Delta.
\end{equation}

(ii) for any } $ p \geq 2,\, 1\leq N< p$, \; and $\pi:
\{1,...,n\}\mapsto\{1,...,N\} $

\begin{align}\label{f5}%(3.33)
&v_p^{k_1,...,k_N}(\De_1,...,\De_N)\geq\\
&2\sum_{l=0}^\infty\sum_{\stackrel{m_i\geq 1,i=\ov{1,N};n\geq
1}{m_1+\cdots+m_N+n=p+l}}C_{k_1}^{m_1}\cdots C_{k_N}^{m_N}(2p)^n
I_{p+l}^{m_{1},\ldots,m_{N}|n}(\De_1,...,\De_N;\varepsilon,(\De)_\pi),\notag
\end{align}
where $k_1+\cdots+k_N=p,\; C_k^m\,=\,k!/m!(k-m)!$, if $k\geq m$ і
$C_k^m\,=\,0$ if $m > k$.

\vspace{0,5cm}

\begin{remark}\label{r3-00}
{\it Inequality \eqref{f5} is a consequence of the combinatorial
arguments, which is relevant to control  the negative part of
interaction potentials. From the physical point of view it means
that for the case when there are at least two particles in some
cube (just this situation takes place in case  $N<p$), then for
sufficiently  small size of a cube edge their $p$-body repulsion
energy has to be greater than the attraction energy of these two
particles for all particles of a system and for all $l\geq p$-body
interactions.}
\end{remark}

\begin{lem}\label{2.1}
Let the sequence of potentials $V=\{V_p\}_{p\geq 2}$ satisfy
${\bf A1-A5}$. Then the interaction is strong superstable {\bf
(SSS)}, i.e. there exist  $m \geq 2$, $a_0>0$ s.t. for any $0< a
\leq a_0$ there exist $A(a)>0$, $B(a)\geq 0$ s.t.
\begin{equation} \label{9-1}%(3.34)
U(\gamma)\geq A(a) \underset{\Delta\in
\overline{\Delta}_{a}:|\ga_\De|\geq 2 } {\sum}\;
|\gamma_{\Delta}|^m - B(a)|\gamma|,\; \text{for any\;}\;\;\gamma
\in \Gamma_{0}.
\end{equation}
with
\[
A(a)=v_2^2(a)- 2\sum_{p\geq 2}4^pI_p^{1|p-1}(a;0),\;\;
B(a)=\sum_{p\geq 2}I_p^{1|p-1}(a;0), m=2,
\]
 and for any $\ga\in\eta\cup\Ga_{X'}^{(>)}$ and
$\overline\ga\in\Ga_{X}^{(<)}\cup\Ga_{\Lambda\sm (X\cup
X')}^{(dil)}$, \; $X'\cap X=\emptyset$,
\begin{equation}\label{lem3-1}%(3.35)
 -\beta W(\gamma |\overline{\gamma}) - \frac{1}{2} \beta
U_{\w{V}^+}(\gamma)\,\leq\,\beta\bar{I} |\eta|,
\end{equation}
where $\bar I(a) :=\sum_{p\geq 2}2^p I_{p}^{1|p-1}(a,0)$ {\rm (see
\eqref{h9}-\eqref{h9-1})}.
\end{lem}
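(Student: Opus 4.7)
The lemma contains two assertions---the explicit strong-superstability bound \eqref{9-1} and the auxiliary estimate \eqref{lem3-1}---and both rest on the decomposition $V=\w V^++V^{(st)}$ of \eqref{V-pV-st} combined with the attraction--repulsion inequality \eqref{f5} of \textbf{A5}. The idea in each case is that the positive intra-cube energy carried by $U_{\w V^+}(\ga)$ in cubes with two or more particles is large enough to dominate all attractive contributions on the right-hand sides, with the factor $2$ in \eqref{f5} reserved for the ``half-and-half'' bookkeeping that \eqref{lem3-1} requires.

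\emph{Strong superstability \eqref{9-1}.} Using \eqref{V-pV-st} and the stability \eqref{(S)} of $V^{(st)}$ provided by \textbf{A4},
\begin{equation*}
U(\ga) \geq U_{\w V^+}(\ga) - B|\ga|,
\end{equation*}
so it suffices to provide a quadratic lower bound for $U_{\w V^+}(\ga)$. I would regroup $U_{\w V^+}(\ga)=\sum_{p\geq 2}\sum_{\eta\subseteq\ga,|\eta|=p}\w V_p^+(\eta)$ cube by cube: subsets of $\ga$ lying entirely inside a single cube $\De$ are nonnegative by \textbf{A5(i)}, and keeping only the 2-body contribution yields at least $\binom{|\ga_\De|}{2}\,v_2^2(a)$, which furnishes the $v_2^2(a)$ appearing in $A(a)$. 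Attractive interactions coupling particles in several cubes are then bounded, for each cube $\De$ with $|\ga_\De|\geq 2$, via \eqref{f5} applied with $N=1$ and $\De_1=\De$; crude binomial estimates $C_{k_1}^{m_1}\leq 2^{k_1}$ together with the enumeration of subsets of $\ga_\De$ that participate generate the factor $4^p$ of the subtracted sum $2\sum_{p\geq 2} 4^p I_p^{1|p-1}(a;0)$. Residual linear-in-$|\ga|$ terms (from $\binom{k}{2}-\tfrac12 k^2$ and from cubes contributing only one particle to a given attractive $p$-tuple) aggregate into $-B(a)|\ga|$ with $B(a)=\sum_{p\geq 2} I_p^{1|p-1}(a;0)$.

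\emph{Auxiliary bound \eqref{lem3-1}.} Write $\ga=\eta\sqcup\ga'$ with $\ga'\in\Ga_{X'}^{(>)}$ and estimate
\begin{equation*}
-W(\ga|\overline\ga) \leq W_{V^-}(\eta|\overline\ga) + W_{V^-}(\ga'|\overline\ga).
\end{equation*}
For the second summand, every negative $p$-body interaction touching a particle of $\ga'_\De$ is assigned to that cube; the bound $|\ga'_\De|>d_\eta^\varepsilon(\De)$ together with the constraints $|\overline\ga_{\De'}|\leq d_\eta^\varepsilon(\De')$ on $X$ and $|\overline\ga_{\De'}|\leq1$ on $\La\sm(X\cup X')$ are precisely what the weights $(1+d^\varepsilon_{\De'_i,\De_{\pi(i)}})$ in \eqref{h9} are designed to absorb. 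Applying \eqref{f5} cube by cube and exploiting the factor $2$ on its right-hand side yields $W_{V^-}(\ga'|\overline\ga)\leq \tfrac12 U_{\w V^+}(\ga)$. For the first summand no dense cube is available to perform absorption, and I would bound it particle by particle: the same occupation-number estimates show that each $x\in\eta$ radiates attractive energy at most $\sum_{p\geq 2} 2^p I_p^{1|p-1}(a;0)=\bar I(a)$, now with only one factor of $2^p$ (rather than the two needed in \eqref{9-1}) because one of the $p$ slots is fixed at $x$.

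\emph{Main obstacle.} The critical step is the dual use of the factor $2$ in \eqref{f5}: half of $U_{\w V^+}$ must produce a strictly positive $A(a)$ in \eqref{9-1} after subtracting the full $p$-summed attractive contribution, while the other half absorbs $W_{V^-}(\ga'|\overline\ga)$ in \eqref{lem3-1}, leaving $\tfrac12 U_{\w V^+}(\ga)$ intact for subsequent use. Ensuring simultaneously that the series over $p$ and $l$ in \textbf{A5} converge uniformly in $a\leq a_0$, and that no attractive interaction is double-counted across \eqref{9-1} and \eqref{lem3-1}, is where the careful accounting of the proof will concentrate.
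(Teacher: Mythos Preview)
Your sketch is essentially correct and follows the same line as the proofs in \cite{KR04} and \cite{PR09} to which the paper defers: the attraction--repulsion relation \eqref{f5} of \textbf{A5} is applied cube by cube so that intra-cube repulsion dominates the attractive part of the energy, and the factor $2$ on the right of \eqref{f5} is indeed what makes the ``half-and-half'' split between \eqref{9-1} and \eqref{lem3-1} work. One small point of exposition to clean up: in the argument for \eqref{9-1} you first invoke the abstract stability $U\geq U_{\w V^+}-B|\ga|$ and then speak of bounding ``attractive interactions'' inside $U_{\w V^+}$, but $U_{\w V^+}$ is nonnegative and contains no attraction; the subtracted term $-2\sum_{p\geq 2}4^p I_p^{1|p-1}(a;0)$ in the stated $A(a)$ arises only if you bound $U(\ga)$ directly (without first passing through stability) and let \eqref{f5} control the full $V^-$ contribution against the intra-cube $\w V^+$ energy.
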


{\it Proof.} The main line of the proof is the same as the proof
of Lemma 3.2 in the article \cite{KR04} and  as the proof of Lemma
3.1 in the article \cite{PR09}. The main difference is in the fact
that for obtaining the inequality \eqref{lem3-1} we use a little
bit cumbersome but weaker condition \eqref{f5} than in
\cite{KR04}, \cite{PR09}. \hfill $\blacksquare$

\subsection{\bf Partition functions  and corellation functions}
We introduce an important function, which will be used for the
approximation of statistical systems:
\begin{equation}\label{22}%(3.36)
  \chi_-^{\Delta}(\gamma) \;
 = \left\{ \begin{array}{ll}
    1,  & \mbox{for $\gamma$ with $|\gamma_{\Delta}|=0\vee 1,$} \\
    0,  & \mbox{ otherwise}.
           \end{array}
   \right.
\end{equation}
Let us write an expression for the statistical sum, that includes
all possible configurations from $\Gamma_{\Lambda}$ and  an
expression for the statistical sum, that includes only dilute
configurations from $\Gamma_{\Lambda}^{dil}$:
\begin{align}%(3.37)-(3.38)
& Z_\Lambda(z, \beta ):=\int_{\Gamma_\Lambda}e^{-\beta
U(\gamma)}\lambda_{z\sigma}(d\gamma)\label{23},\\
& Z_\Lambda^{(-)}(z, \beta, a):=\; \int_{\Gamma_\Lambda}e^{-\beta
U(\gamma)}\prod_{\Delta\in\overline{\Delta_{a}}\cap\Lambda}\chi^\Delta_-(\gamma)\lambda_{z\sigma}(d\gamma)
:=\int_{\Gamma_\Lambda}e^{-\beta
U(\gamma)}\lambda^a_{z\sigma}(d\gamma).\label{24}
\end{align}

Let us define correlation function $\rho_{\Lambda}(\eta; z,
\beta)$
 in the case of grand canonical ensemble:
\begin{equation} \label{27}%(3.39)
\rho_\La(\eta; z, \beta) :=\frac{1}{Z_\La (z, \beta )}
\int_{\Ga_\La}e^{-\be
U(\eta\cup\ga)}\la_{z\si}(d\ga),\;\;\;\eta\in\Ga_{\La},
\end{equation}
and corresponding correlation functions of quasi-continuous
approximation $\rho_\La^{(-)}(\eta; z, \beta, a)$ are defined as:
\begin{equation} \label{28}%(3.40)
\rho_\La^{(-)}(\eta; z, \beta, a) :=\frac{1}{Z_\La^{(-)}(z, \beta,
a)} \int_{\Ga_\La}e^{-\be U(\eta\cup\ga)}\la^a_{z\si}(\eta\cup
d\ga),
\end{equation}
where  according to \eqref{24}
\begin{equation} \label{24-1}%(3.41)
\la^a_{z\si}(\eta\cup d\ga):
=\prod_{\Delta\in\overline{\Delta_{a}}\cap\Lambda}\chi^\Delta_-(\eta\cup\gamma)\la_{z\si}(d\ga).
\end{equation}

\section{\bf   Main results } We prove the results for the infinite
volume characteristics, so let
 $(\La_{l})$ be a sequence of bounded Lebesgue measurable regions of
$\R^d$:
\begin{equation}\label{29}%(4.1)
\La_1\subset\La_2\subset\ldots\subset\La_n\subset\ldots,\;\;\underset{l}\cup\La_l\;=\;\R^d,
\end{equation}
and the sequence $(\La_{l})$ tends to $\R^d$ in the sense of
Fisher (see \cite{Ru69}, Ch.2, S. 2.1).

 It is well-known that for any configuration $\eta\in\Ga_0$ and any sequence
\eqref{29}, such that $\eta\subset\La_1$ there exists subsequence
$(\La^{'}_{k})$ of $(\La_{l})$, such that
\begin{equation}\label{30}%(4.2)
\lim_{k\rightarrow \infty}\rho_{\La^{'}_{k}}(\eta;z,\beta)=
\rho(\eta;z,\beta) <\infty
\end{equation}
 for all positive $z,\beta$ uniformly on $\frak{B}(\Ga_0)$ . This result follows from the uniform
bounds of the family $\{\rho_{\La_l}:\La_l\in\mathcal{B}_c({{\Bbb
R}}^{d})$:
\begin{equation}\label{30-1}%(4.3)
\rho_{\La_{l}}(\eta;z,\beta)\leq\;\xi^{|\eta|}e^{-\beta\,U_{\delta}^{+}}
\end{equation}
with some positive $\xi$, independent of $\La_l$, $\eta$.

The inequality \eqref{30-1} without exponent in r.h.s. was
obtained for the first time in the article \cite{Ru70}. In the
work \cite{Re98} a new proof (much easier) was presented with
exponent $e^{-\beta\,U_{1/2}^{+}}$ and in the articles \cite{KR04}
and \cite{PR07} it was proved for many-body interactions for
finite range and infinite range cases respectively.

In the next section we  give a sketch of proof of the following
lemma

\begin{lem}\label{4.1}
Let the interaction potential $V$  satisfy the assumptions
\textbf{(A)} for two-body  and ${\bf A1-A5}$ for many-body
interactions. Then there exist some $0<a_*\leq a_0<r_0$ and a
positive constant $\xi_{-}=\xi_{-}(a_*)$, which does not depend on
$\La_l,\,a$ and $\eta$, s.t.
\begin{equation}\label{30-6}%(4.4)
\rho^{(-)}_{\La_l}(\eta; z, \beta,
a)\leq\xi_{-}^{|\eta|}e^{-\beta\,U_{\delta}^{+}},
\end{equation}
holds for any  $a<a_*$ such that $a_*/a\in\N$.
\end{lem}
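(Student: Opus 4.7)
The plan is to adapt the Ruelle-type uniform-bound strategy of \cite{Re98, KR04, PR07} to the quasi-continuous setting. The crucial observation is that the modified Lebesgue--Poisson measure $\la^a_{z\si}(\eta\cup d\ga)$ restricts the integration in \eqref{28} to configurations $\ga$ for which $\eta\cup\ga$ is dilute; equivalently, $\ga\sbs\La':=\La\sm\bigcup_{x\in\eta}\De_a(x)$ and $|\ga_\De|\leq 1$ for each remaining cube $\De\sbs\La'$. Consequently the dense branches of the usual Ruelle expansion are empty, so only the purely dilute regime of Lemma~\ref{2.1} needs to be invoked. In the two-body case the assumptions \textbf{(A)} imply \textbf{A1--A5} via Proposition~\ref{prop2.2}, so Lemma~\ref{2.1} applies in both settings.

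First I would split $U(\eta\cup\ga)=U(\eta)+W(\eta;\ga)+U(\ga)$ and apply \eqref{lem3-1} with $X=X'=\es$, taking the role of ``$\ga$'' in Lemma~\ref{2.1} to be $\eta$ and the role of ``$\ov\ga$'' to be the dilute integration variable. This gives
\[
-\be W(\eta;\ga)\;\leq\;\be\bar I(a_*)|\eta|+\tfrac12\be U_{\w V^+}(\eta).
\]
Combining with the decomposition $U=U_{\w V^+}+U_{V^{(st)}}$ and the stability inequality \eqref{(S)} one obtains
\[
e^{-\be U(\eta)-\be W(\eta;\ga)}\;\leq\;e^{-\frac12\be U_{\w V^+}(\eta)}\,e^{\be(B+\bar I(a_*))|\eta|}.
\]
The first right-hand factor supplies the $e^{-\be U^+_\delta(\eta)}$ appearing in \eqref{30-6} (after identifying $U^+_\delta$ with $\tfrac12 U_{\w V^+}$ by an appropriate choice of $\delta\in(0,1)$), and the second contributes to the base of $\xi_-^{|\eta|}$.

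Next, the integral that remains after extracting the $\eta$-dependent prefactors is exactly $Z^{(-)}_{\La'}(z,\be,a)$, because the indicator $\prod_{\De\sbs\La}\chi_-^\De(\eta\cup\ga)$ factorizes into ``$\ga$ has no point in any cube meeting $\eta$'' times $\prod_{\De\sbs\La'}\chi_-^\De(\ga)$. Any dilute configuration in $\La'$ extends trivially (by adding no particles in $\La\sm\La'$) to a dilute configuration in $\La$, which gives the elementary monotonicity $Z^{(-)}_\La(z,\be,a)\geq Z^{(-)}_{\La'}(z,\be,a)$, hence $Z^{(-)}_{\La'}/Z^{(-)}_\La\leq 1$. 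Putting the two steps together yields \eqref{30-6} with $\xi_-:=e^{\be(B+\bar I(a_*))}$, visibly independent of $\La_l$, $a$, and $\eta$.

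The main obstacle is ensuring uniformity of the constants $B$ and $\bar I(a_*)$ as the parameter $a$ shrinks. This is the reason for restricting to compatible refinements with $a_*/a\in\N$ and $a_*$ fixed as in Proposition~\ref{prop2.2}: on such refinements the stability constants $A(a_*), B(a_*)$ of \eqref{21-1} are frozen at the coarse scale $a_*$ (cf.\ the first remark after Definition~\ref{d:1}), and likewise the absolutely convergent series $\bar I(a_*)=\sum_{p\geq 2}2^p I_p^{1|p-1}(a_*,0)$ is evaluated at $a_*$ rather than at $a$. Thus $\xi_-=\xi_-(a_*)$ genuinely captures uniformity in the approximation parameter $a$, and the bound \eqref{30-6} follows.
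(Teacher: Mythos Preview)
Your argument has a genuine gap in the application of Lemma~\ref{2.1}. You want to invoke the estimate \eqref{lem3-1} with $X=X'=\emptyset$ and with the constant $\bar I$ evaluated at the \emph{coarse} scale $a_*$, so that $\xi_-$ does not depend on $a$. But the hypothesis of \eqref{lem3-1} with $X=X'=\emptyset$ requires the integration variable $\overline\gamma$ to lie in $\Gamma_\Lambda^{(dil)}$ \emph{with respect to the partition $\overline\Delta_{a_*}$}. The measure $\lambda^a_{z\sigma}(\eta\cup d\gamma)$ only guarantees that $\gamma$ is dilute at the fine scale $a$: a configuration with at most one point per $a$-cube may place up to $(a_*/a)^d$ points in a single $a_*$-cube, so it is in general \emph{not} in $\Gamma_\Lambda^{(dil)}$ at scale $a_*$. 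Concretely, for $\eta=\{x\}$ one has $U_{\w V^+}(\eta)=0$, and a configuration $\gamma$ that is dilute at scale $a$ can pack arbitrarily many points into the attractive region around $x$ as $a\to 0$; then $-W(\eta;\gamma)$ is of order $\upsilon_0(a)\sim\phi_0/a^d\to\infty$ (cf.\ \eqref{21-2}), so no inequality of the form $-\beta W(\eta;\gamma)\leq\beta\bar I(a_*)|\eta|+\tfrac12\beta U_{\w V^+}(\eta)$ with a fixed $\bar I(a_*)$ can hold uniformly over such $\gamma$. Applying \eqref{lem3-1} at the fine scale $a$ instead would make the hypothesis correct, but then the constant becomes $\bar I(a)\to\infty$, and uniformity in $a$ is lost. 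The remark after Definition~\ref{d:1} that you cite concerns the superstability bound for $U(\gamma)$, not the interaction estimate \eqref{lem3-1}; it does not transfer.

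This is exactly why the paper does \emph{not} short-circuit the dense-configuration expansion. It inserts the partition of unity \eqref{41} at the coarse scale $a_*$ into \eqref{28} (see \eqref{43} and Remark~\ref{r41}): the sets $X$ record those $a_*$-cubes in which the fine-scale-dilute configuration $\gamma$ nevertheless contains $\geq 2$ points, and the full machinery of \cite{PR07,PR09} is then run term by term, the only new ingredient being the harmless inequality $\chi_-^{\Delta'}(\eta\cup\gamma_{\Delta'})\leq\chi_-^{\Delta'}(\gamma_{\Delta'})$ needed to pass from $\lambda^a_{z\sigma}(\eta\cup d\gamma)$ to $\lambda^a_{z\sigma}(d\gamma)$. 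The ``dense branches'' you declare empty are precisely the ones that carry the proof.
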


 So, as in the previous case, there
exists subsequence ($\La_m^{\prime\prime}$) of the sequence
($\La_l$) such that one can define
\begin{equation}\label{30-7}%(4.5)
\rho^{(-)}(\eta;z,\beta, a)\;=\;\lim_{m\rightarrow
\infty}\rho_{\La^{\prime\prime}_{m}}^{(-)}(\eta;z,\beta, a)
 <\infty.
\end{equation}

\begin{rem}\label{r21}
The limit functions $\rho(\eta;z,\beta)$ and
$\rho^{(-)}(\eta;z,\beta, a)$ in \eqref{30} and \eqref{30-7} can
be different for different subsequences $\La^{\prime}_k$ and
$\La^{\prime\prime}_m$. So, in order  to make the function
$\rho^{(-)}(\eta;z,\beta, a)$  be  approximation of the function
$\rho(\eta;z,\beta)$ we have to take the subsequence
$\La^{\prime\prime}_m$ in the limit \eqref{30-7} as some
subsequence $\La^{\prime}_k$.
\end{rem}

Then we can formulate the following result.

\begin{theorem}\label{th1}  Let the interaction potential $V$  satisfy the assumptions
\textbf{(A)} for two-body  and ${\bf A1-A5}$ for many-body
interactions. Then for any $\varepsilon
> 0$, any positive  $z$ and $\be$ and any configuration $\eta\in
\Gamma_{0}$  there exists \, $a = a(z, \beta, \varepsilon)
> 0$ such that:
\begin{equation}\label{30-2}%(4.6)
|\rho(\eta;z,\beta) - \rho^{(-)}(\eta;z,\beta, a)| < \varepsilon,
\end{equation}
where $\rho(\eta;z,\beta)$ and $\rho^{(-)}(\eta;z,\beta, a)$ are
the limits of $\rho_{\La^{\prime\prime}_{m}}(\eta;z,\beta)$ and
$\rho_{\La^{\prime\prime}_{m}}^{(-)}(\eta;z,\beta, a)$
respectively with the same subsequence of the sequence $(\La_l)$
(see Remark \ref{r21}).
\end{theorem}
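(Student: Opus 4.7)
The first step is algebraic. Introduce $N_\La(\eta):=\int_{\Ga_\La}e^{-\be U(\eta\cup\ga)}\la_{z\si}(d\ga)$ and its dilute counterpart $N_\La^{(-)}(\eta)$ defined by the same integral with $\la_{z\si}(d\ga)$ replaced by $\la^a_{z\si}(\eta\cup d\ga)$, and set $R_\La^Z:=Z_\La-Z_\La^{(-)}$, $R_\La^N(\eta):=N_\La(\eta)-N_\La^{(-)}(\eta)$. A short manipulation gives
\begin{equation*}
\rho_\La(\eta)-\rho_\La^{(-)}(\eta)\;=\;\frac{R_\La^N(\eta)}{Z_\La}\,-\,\rho_\La^{(-)}(\eta)\,\frac{R_\La^Z}{Z_\La}.
\end{equation*}
Since Lemma~\ref{4.1} supplies a bound on $\rho_\La^{(-)}(\eta)$ that is independent of $\La$ and $a$, and since $Z_\La\ge Z_\La^{(-)}$, it suffices to show that $R_\La^N(\eta)/Z_\La^{(-)}$ and $R_\La^Z/Z_\La^{(-)}$ tend to $0$ as $a\to 0$ \emph{uniformly} in $\La$. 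Combined with Remark~\ref{r21} (which permits a common subsequence $\La_m''$ for $\rho_\La$ and $\rho_\La^{(-)}$), this yields \eqref{30-2}.

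Next I expand the dense part along the lines of \cite{Re98,PR07}. Inserting $1=\chi_-^\De+(1-\chi_-^\De)$ in every cube $\De\subset\La$ and multiplying out splits each integrand into a sum over the union $X$ of cubes on which $|\ga_\De|\ge 2$ is forced; the term $X=\emptyset$ reproduces the dilute integrand, so only $X\ne\emptyset$ contributes to $R_\La^Z$ and $R_\La^N$. For each such $X$, decompose $\ga=\ga_X\cup\ga_{X^c}$ and split the energy as $U(\ga)=U(\ga_X)+U(\ga_{X^c})+W(\ga_X;\ga_{X^c})$. The strong superstability \eqref{9-1} (available in both the two-body and many-body settings by Lemma~\ref{2.1}) supplies an exponential factor $\exp(-\be A(a)\sum_{\De\subset X}|\ga_\De|^m)$, whereas the cross term $W(\ga_X;\ga_{X^c})$ together with half of the positive energy of $\ga_X$ is absorbed by the attraction-repulsion estimate \eqref{lem3-1}, in which the role of $\eta$ is played by $\ga_X$ (with the external $\eta$ appended in the case of $R_\La^N$). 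The residual $\ga_{X^c}$-integral is bounded by $Z_\La^{(-)}$, and cube-by-cube integration of $\ga_X$ produces
\begin{equation*}
\frac{R_\La^Z}{Z_\La^{(-)}}\;\le\;\suml_{\emptyset\ne X\subset\La}\;\prod_{\De\subset X}\zeta(a),\quad \zeta(a):=\suml_{k\ge 2}\frac{(za^d)^k}{k!}\exp\!\bigl(-\be A(a)k^m+\be[B(a)+\bar I(a)]k\bigr),
\end{equation*}
with an analogous bound for $R_\La^N(\eta)/Z_\La^{(-)}$ carrying the additional factor $e^{\be\bar I(a)|\eta|}$.

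By the asymptotics \eqref{21-2}, $A(a)\sim\varphi_0/(4a^s)$ while $B(a)$ and $\bar I(a)$ are only of order $a^{-d}$, and $s>d$ (for the many-body case the same scaling is built into the definitions of $A(a)$, $B(a)$, $\bar I(a)$ in Lemma~\ref{2.1} together with ${\bf A5}$). Hence the per-cube activity $\zeta(a)$ is dominated by $(za^d)^2\exp(-2^m\be A(a))$ and vanishes faster than any power of $a$ as $a\to 0$. The main obstacle---and the step that will require most of the writing---is turning this pointwise smallness into a bound on $\sum_{X\ne\emptyset}\prod_{\De\subset X}\zeta(a)$ that is uniform in $\La$, since the naive counting contributes $|\La|/a^d$ cubes and diverges as $\La\uparrow\R^d$. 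Following the polymer strategy of \cite{Re98,PR07}, one groups $X$ into its connected components and applies a Koteck\'y--Preiss--type convergence criterion; the smallness condition reduces exactly to the super-polynomial decay of $\zeta(a)$ and is therefore met for all sufficiently small $a$, producing a $\La$-independent bound that vanishes as $a\to 0$. Combined with the first paragraph, this establishes \eqref{30-2}.
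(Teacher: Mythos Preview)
Your algebraic decomposition and the per-cube estimate for $\zeta(a)$ are essentially the computation the paper carries out in its Appendix (proof of Lemma~5.1), and they are correct as far as they go. The genuine difference is the \emph{scope} of convergence you aim for: the paper never claims that $\rho_\La-\rho_\La^{(-)}\to 0$ uniformly in $\La$. It proves only the finite-volume statement (Lemma~\ref{l3.3}) and then combines it with the existence of the subsequential limits \eqref{30}, \eqref{30-7} in a standard $\varepsilon/3$ argument.

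The gap in your proposal is precisely the uniformity claim. Once you have bounded the $\ga_{X^c}$-integral by $Z_\La^{(-)}$ and extracted the per-cube activity $\zeta(a)$, the cubes are \emph{completely decoupled}: the remaining sum is exactly
\[
\suml_{\emptyset\ne X\subset\La}\prod_{\De\subset X}\zeta(a)\;=\;(1+\zeta(a))^{|\La|/a^d}-1,
\]
which diverges as $|\La|\to\infty$ for every fixed $a>0$, however fast $\zeta(a)\to 0$. A Koteck\'y--Preiss argument cannot help here, because after your crude bound the activity of $X$ factorizes over its cubes; there is no hard-core exclusion or connectedness constraint left to organise the sum into a convergent polymer gas. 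The polymer machinery of \cite{Re98,PR07} is used to produce the uniform bound \eqref{30-1} on the correlation functions themselves, not on the difference $\rho_\La-\rho_\La^{(-)}$, and it does not transfer to this situation. (Note also that the bound \eqref{lem3-1} you invoke yields $\beta\bar I|\eta|$ only when the dense block lies in $\Ga^{(>)}$; in the Appendix the paper uses instead the weaker estimate $e^{\beta\upsilon_*(a)(|\eta|+|\ga_X|)}$, which is what produces your $\zeta(a)$ with the linear term in $k$.)

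Your estimates do establish $R^\La(\eta;z,\beta,a)\to 0$ and $Z_\La^{(-)}/Z_\La\to 1$ for each fixed $\La$; that is exactly the content of the paper's Lemmas~\ref{l.4} and~\ref{l2}. To finish, drop the uniformity claim and instead feed this finite-volume convergence into the $\varepsilon/3$ argument with the subsequence $\La_m''$ of Remark~\ref{r21}, as the paper does.
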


\begin{proof}

The proof is based on the  existence of the limits \eqref{30},
\eqref{30-7} and the following  lemma.

\begin{lem}\label{l3.3}
Let the interaction potential $V$  satisfy the assumptions
\textbf{(A)} for two-body  and ${\bf A1-A5}$ for many-body
interactions. Then for any sequence $\La_l$ of the type \eqref{29}

\begin{equation}\label{30-9}%(4.7)
\lim_{a\rightarrow 0}\rho^{(-)}_{\La_{l}}(\eta; z, \beta,
a)=\rho_{\La_{l}}(\eta; z, \beta).
\end{equation}
and hence for any $\varepsilon>0$ there exists $a<a_*$, s.t. the
following inequality holds:
\begin{equation}\label{30-10}%(4.8)
|\rho^{(-)}_{\La_{l}}(\eta; z, \beta, a)-\rho_{\La_{l}}(\eta; z,
\beta)|\leq\frac{\varepsilon}{3}.
\end{equation}
\end{lem}

From the existence of the limits \eqref{30} and \eqref{30-1} for
any $\varepsilon>0$ $\exists K_1\in\N$, s.t. for any $k\geq K_1$
the following inequality holds:

\begin{equation}\label{30-11}%(4.9)
|\rho_{\La^{\prime\prime}_{m}}(\eta; z, \beta)-\rho(\eta;
z,\beta)|\leq\frac{\varepsilon}{3}.
\end{equation}

 and  $\exists K_2\in\N$,
s.t. for any $k\geq K_2$ the following inequality holds:
\begin{equation}\label{30-12}%(4.10)
|\rho^{(-)}_{\La^{\prime\prime}_{m}}(\eta; z, \beta,
a)-\rho^{(-)}(\eta; z, \beta, a)|\leq\frac{\varepsilon}{3}.
\end{equation}

Then the statement of the theorem {\bf 4.1} follows from
\eqref{30-10} with $\La_l\equiv\La^{\prime\prime}_{m}$ and
\eqref{30-11}, \eqref{30-12}:
\begin{align*}
&|\rho(\eta; z, \beta)-\rho^{(-)}(\eta; z, \beta, a)|=\\
&=|\rho(\eta; z, \beta)-\rho_{\La^{\prime\prime}_{m}}(\eta; z,\beta)+ \\
&+\rho_{\La^{\prime\prime}_{m}}(\eta; z,\beta)-\rho^{(-)}_{\La^{\prime\prime}_{m}}(\eta;z, \beta, a)+\\
&+\rho^{(-)}_{\La^{\prime\prime}_{m}}(\eta; z, \beta, a)-\rho^{(-)}(\eta; z, \beta, a)|\leq\\
&\leq\frac{\varepsilon}{3}+\frac{\varepsilon}{3}+\frac{\varepsilon}{3}=\varepsilon
\end{align*}

\end{proof}

 \begin{corollary}\label{co1}
  The inequality   \eqref {30-2} ensures   existence of
 the limit:
\begin{equation}\label{30-3}%(4.11)
\lim_{a\rightarrow 0}\rho^{(-)}(\eta;z,\beta,a)=
\rho(\eta;z,\beta).
\end{equation}
 for any  positive $z, \;\textnormal{any} \beta>0$ and  $\eta\in\Gamma_0.$
\end{corollary}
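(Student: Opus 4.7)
The plan is to upgrade Theorem \ref{th1}, which yields a \emph{single} $a$ satisfying $|\rho-\rho^{(-)}(a)|<\varepsilon$ for each $\varepsilon>0$, to the stronger statement that the inequality holds for \emph{all} sufficiently small $a$; that stronger statement is precisely the limit \eqref{30-3}.

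First, I would revisit the three-term decomposition used in the proof of Theorem \ref{th1},
\[
|\rho(\eta;z,\beta)-\rho^{(-)}(\eta;z,\beta,a)|\;\leq\; T_1\,+\,T_2(a)\,+\,T_3(a),
\]
with $T_1=|\rho-\rho_{\La''_m}|$, $T_2(a)=|\rho_{\La''_m}-\rho^{(-)}_{\La''_m}(a)|$, and $T_3(a)=|\rho^{(-)}_{\La''_m}(a)-\rho^{(-)}(a)|$. The term $T_1$ depends only on $m$ and, by \eqref{30-11}, can be pushed below $\varepsilon/3$ uniformly in $a$ once $m$ is large enough. The term $T_2(a)$, for any fixed $\La''_m$, tends to $0$ as $a\to 0$ by Lemma \ref{l3.3}; hence there is a threshold $\delta=\delta(\varepsilon,\La''_m)>0$ such that $T_2(a)<\varepsilon/3$ for every $a<\delta$. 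Finally, $T_3(a)$, for each fixed $a$, tends to $0$ as $m\to\infty$ by the very definition \eqref{30-7} of $\rho^{(-)}(\eta;z,\beta,a)$, and can always be controlled by $\varepsilon/3$ upon enlarging $m$ further if necessary (the required $m$ may depend on $a$).

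Second, I would fix the order of the quantifiers. Given $\varepsilon>0$, first choose $m_0$ large enough that $T_1<\varepsilon/3$ for every $m\geq m_0$. With $\La''_{m_0}$ now fixed, Lemma \ref{l3.3} produces $\delta>0$ such that $T_2(a)<\varepsilon/3$ for all $a<\delta$. For each such $a$, select $m=m(a)\geq m_0$ large enough that $T_3(a)<\varepsilon/3$; enlarging $m$ past $m_0$ does not disturb the bound on $T_1$, and since the $a$-dependent term sits inside $T_3$ one never re-opens $T_2$. Combining the three bounds through the triangle inequality gives $|\rho(\eta;z,\beta)-\rho^{(-)}(\eta;z,\beta,a)|<\varepsilon$ for every $a<\delta$, which is exactly the claim \eqref{30-3}.

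The main---and really only---obstacle is the correct sequencing of the two limits in $a$ and in $m$: one must select $m_0$ \emph{before} the threshold $\delta$ on $a$, and then allow $m$ to grow further for individual $a$'s without reopening $T_1$ or $T_2$. Once this bookkeeping is arranged, the corollary follows from Theorem \ref{th1} together with Lemma \ref{l3.3} without any additional analytic input.
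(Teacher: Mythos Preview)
The paper records the corollary without a separate proof, treating \eqref{30-3} as an immediate consequence of \eqref{30-2}. You are right that Theorem~\ref{th1}, as worded, produces only a single $a=a(\varepsilon)$ and that the passage to ``all sufficiently small $a$'' deserves justification.

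However, your proposed bookkeeping contains a genuine gap. The three terms $T_1,T_2(a),T_3(a)$ in the telescoping inequality all carry the \emph{same} volume index $m$. When in your second step you replace $m_0$ by $m(a)\geq m_0$ in order to force $T_3(a)<\varepsilon/3$, the middle term simultaneously becomes
\[
T_2(a)\;=\;\bigl|\rho_{\La''_{m(a)}}-\rho^{(-)}_{\La''_{m(a)}}(a)\bigr|,
\]
and the threshold $\delta$ you obtained from Lemma~\ref{l3.3} was chosen for the \emph{fixed} volume $\La''_{m_0}$, not for the larger $\La''_{m(a)}$. Thus $T_2$ \emph{is} re-opened, contrary to your assertion that ``one never re-opens $T_2$''. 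The estimate \eqref{A12} in the Appendix shows explicitly that the rate of convergence in Lemma~\ref{l3.3} depends on $|\La|$ through the factor $(1+\epsilon_1(a))^{|\La|/a^d}$, so one cannot simply reuse the same $\delta$ across growing volumes.

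What is actually needed is a uniformity statement: either that the convergence $\rho^{(-)}_{\La''_m}(a)\to\rho^{(-)}(a)$ in \eqref{30-7} is uniform in $a\leq a_*$ (so that a single $m_0$ controls $T_1$ and $T_3$ simultaneously for all small $a$, and $T_2$ can be handled at that fixed $m_0$), or that the convergence in Lemma~\ref{l3.3} is uniform along the subsequence $(\La''_m)$ (so that enlarging $m$ does not spoil the bound on $T_2$). Neither follows automatically from the ingredients you cite; establishing one of them is genuine analytic input, not merely ordering of quantifiers, and the paper itself does not supply it.
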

For two-body interaction this result in the region of sufficiently
small values of a parameter $z$ is  obtained in the article
\cite{RT09}.

\section{\bf   Proof of the Lemmas 4.1 and 4.2}

\subsection{\bf   Proof of the Lemma 4.1}

 {\it The proof of the lemma 4.1}  is based
on the expansion of correlation functions into  dense
configurations which was proposed in \cite{PR07}
 (see, also,  \cite{PR09}) and actually coincides with the proof of the theorem 2.2
 of the article \cite{PR07} for two-body interaction and with the
 proof of the theorem 2.1 of the article  \cite{PR09} for many-body
 interaction. The main difference in proving the lemma 4.1  is that in
 the definition of the correlation functions $\rho_\La^{(-)}(\eta;z,\beta,a)$
the integrals are w.r.t. the measure $\la^a$ (see \eqref{28},
\eqref{24-1}), unlike in is in  the definition of the correlation
functions $\rho_\Lambda(\eta; z, \beta)$ where  the integrals are
w.r.t. the measure $\la$ (see \eqref{23}--\eqref{27}) which takes
into account all possible configurations. So, the main goal of
this lemma is to show that the constant $\xi_{-}$ in the
inequality \eqref{30-6} does not depend on the parameter $a$. So,
in this section we give only main point in the construction of
expansion and estimate some value which did not appear in the
previous proofs.

 In order to arrange this
expansion let us define also an indicator of a dense configuration
in any cube $\Delta\in\overline{\Delta}_{a}$ as
$\chi_{+}^{\Delta}(\gamma) \;= \; 1-\chi_{-}^{\Delta}(\gamma)$.

Then we use the following partition of the unity for any
$\gamma\in\Gamma_\Lambda$ with $a= a_*$, i.e.
$\overline\De_{a_*}$:

\begin{align}\label{41}%(5.1)
 1 \;= \; \prod_{\Delta\subset \Lambda(a_*)} \left[
\chi_{-}^{\Delta}( \gamma)+\chi_{+}^{\Delta}( \gamma) \right]
 \;&= \; \sum_{n=0}^{N_{\Lambda(a_*)}}\sum_{\{\Delta_1,...,\Delta_n\}\subset\Lambda(a_*)}
 \prod_{i=1}^{n}\chi_{+}^{\Delta_{i}}(\gamma)\prod_{\Delta\subset\Lambda(a_*)\setminus\cup_{i=1}^n\Delta_i}\chi_{-}^{\Delta}(\gamma)\;:=\notag\\
 &:=\;\sum_{\emptyset\subseteq X\subseteq\Lambda
(a_*)}\widetilde\chi_+^{X}(\gamma) \widetilde\chi_-^{\Lambda
(a_*)\setminus X}(\gamma),
 \end{align}
where $N_\Lambda\;=\;|\Lambda|/a_*^d$ (here the symbol $|\La|$
means Lebesgue measure of the set $\La(a_*)$) is the number of
cubes $\Delta$ in the volume $\Lambda=\La(a_*)$(see subsection
2.4), and

\begin{equation}\label{42}%(5.2)
\widetilde\chi_{\pm}^{X}(\gamma) \;= \; \prod_{\Delta\subset
X}\chi_{\pm }^{\Delta}(\gamma).
 \end{equation}
Inserting \eqref{41} with  $a= a_*$ into the definition \eqref{28}
of correlation functions $\rho_\Lambda^{(-)}(\eta; z, \beta, a)$
with $a< a_*$ s.t. $\frac{a_*}{a} \in \mathbb{N}$ we obtain:
\begin{equation}\label{43}%(5.3)
\rho_\Lambda^{(-)}(\eta; z, \beta, a) \;=
 \; \frac{1}{Z_\La^{(-)}(z, \beta, a)}\sum_{\emptyset\subseteq X\subseteq\Lambda(a_*)}\quad
\int_{\Ga_\La}
   e^{-\beta U(\eta\cup\ga)}\,\widetilde\chi_+^{X }(\ga)
\widetilde\chi_-^{\Lambda\setminus X}(\ga)\la^a_{z\si}(\eta\cup
d\ga).
\end{equation}

\begin{rem}\label{r41}
We want to stress that the sets $X$ in  \eqref{43} are the unions
of cubes $\De\in\overline{\Delta}_{a_*}$, but in the product of
the definition $\la^a_{z\si}(\eta\cup d\ga)$ (see \eqref{24-1})
$\De\in\overline{\Delta}_{a}$ with $a<a_*$ and $a_*/a\in\N$.
\end{rem}

The next steps in the construction of expansion and estimates are
completely the same as in the proof of the theorem 2.2 of the
article \cite{PR07} for two-body interaction and the theorem 2.1
of the article \cite{PR09} for many-body interaction. It is
necessary only to note that to change the integration w.r.t
measure $\la^a_{z\si}(\eta\cup d\ga_{\De'})$  for the integration
w.r.t. measure $\la^a_{z\si}(d\ga_{\De'})$ (see \eqref{24-1}) we
use the following inequality:

\[
\chi^{\Delta'}_-(\eta\cup\gamma_{\Delta'}) \leq
\chi^{\Delta'}_-(\gamma_{\Delta'}),
\]
which follows from the definition \eqref{22} for any
$\De'\in\overline{\Delta_{a}}$ and any $\ga\in\Ga$. \hfill
$\blacksquare$

\subsection{\bf   Proof of the Lemmas  4.2}

Let us insert now the unity \eqref{41} (but with partition $\La$
into cubes with edges $a$ instead of $a_*$ and the argument
$\eta\cup\ga$ in each function $\chi_{\pm}^\De$ )  in \eqref{27}.
Then we obtain the following expansion:

\begin{align}\label{57}%(5.4)
\rho_\Lambda(\eta;z,\beta) \;=
 \; \frac{z^{|\eta|}}{Z_\La(z, \beta)}
\sum_{X\subseteq\Lambda}\quad \int_{\Ga_\La}
   e^{-\beta U(\eta\cup\ga)}\,\widetilde\chi_+^{X}(\eta\cup\ga)
\widetilde\chi_-^{\Lambda\setminus
X}(\eta\cup\ga)\la_{z\si}(d\ga).
\end{align}

Extracting the first term at $X=\emptyset$ and using the
definitions \eqref{23}-\eqref{28} we can rewrite \eqref{57} in the
following form:
\begin{equation}\label{58}%(5.5)
\rho_{\Lambda}(\eta;z, \beta) \;= \;\frac{Z_\La^{(-)}(z, \beta, a
)}{Z_\La(z, \beta) }\rho_{\Lambda}^{(-)}(\eta;z, \beta, a)
 + R^{\Lambda}(\eta;z,\beta, a),
\end{equation}
where
\begin{align}\label{59}%(5.6)
R^{\Lambda}(\eta;z,\beta, a)\;=
 \; \frac{z^{|\eta|}}{Z_\La(z, \beta)}
 \sum_{\emptyset\ne X\subseteq\Lambda}\quad
\int_{\Ga_\La}
   e^{-\beta U(\eta\cup\ga)}\,\widetilde\chi_+^{X}(\eta\cup\ga)
\widetilde\chi_-^{\Lambda\setminus
X}(\eta\cup\ga)\la_{z\si}(d\ga).
\end{align}

The proof of the lemma 4.2  is based on  two  technical lemmas.

\begin{lem}\label{l.4}
Let the interaction potential $V$  satisfy the assumptions
\textbf{(A)} for two-body  and ${\bf A1-A5}$ for many-body
interactions. Then for any fixed volume $\La \in\frak{B}_c(\R^d)$
and any configuration $\eta \in \Gamma_0$ the following holds:
\begin{align}\label{60}%(5.7)
&\underset{a\rightarrow
0}{\textnormal{lim}}R^{\Lambda}(\eta;z,\beta, a)=0.
\end{align}
\end{lem}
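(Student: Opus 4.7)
The plan is to collapse the combinatorial sum in \eqref{59} by means of the partition of unity and then reduce the claim to Lebesgue's dominated convergence theorem. Since $\chi_+^\De + \chi_-^\De = 1$ for every cube $\De$ of edge $a$, one has
\[
\suml_{X\subseteq\La} \w\chi_+^X(\eta\cup\ga)\,\w\chi_-^{\La\sm X}(\eta\cup\ga) \;=\; \prod_{\De\sbs\La}\bigl(\chi_+^\De(\eta\cup\ga)+\chi_-^\De(\eta\cup\ga)\bigr) \;=\; 1.
\]
Isolating the $X=\es$ term (for which $\w\chi_+^\es\equiv 1$) and substituting into \eqref{59} yields the simpler expression
\[
R^{\La}(\eta;z,\be,a) \;=\; \frac{z^{|\eta|}}{Z_\La(z,\be)}\intl_{\Ga_\La} e^{-\be U(\eta\cup\ga)}\bigl(1-\w\chi_-^{\La}(\eta\cup\ga)\bigr)\,\la_{z\si}(d\ga),
\]
in which the bracket is the indicator that at least one cube of the partition of $\La$ into cubes of edge $a$ contains two or more points of $\eta\cup\ga$.

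For every fixed $\ga\in\Ga_\La$ the set $\eta\cup\ga$ is a finite subset of $\R^d$. If $|\eta\cup\ga|\leq 1$ the bracket already equals zero, and otherwise the minimum pairwise distance $\de(\eta,\ga)>0$ is well defined; then for every $a<\de(\eta,\ga)/\sqrt{d}$ no cube of edge $a$ can contain two points of $\eta\cup\ga$, so $\w\chi_-^{\La}(\eta\cup\ga)=1$ and the integrand vanishes. Hence the integrand tends to $0$ pointwise as $a\to 0$.

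For the dominating function I use that strong superstability of $V$ implies ordinary stability of the energy: there exists $B\geq 0$ with $U(\eta\cup\ga)\geq -B(|\eta|+|\ga|)$ for every $\ga\in\Ga_0$. Consequently
\[
e^{-\be U(\eta\cup\ga)}\bigl(1-\w\chi_-^{\La}(\eta\cup\ga)\bigr) \;\leq\; e^{\be B|\eta|}\,e^{\be B|\ga|},
\]
and the majorant is $\la_{z\si}$-integrable on $\Ga_\La$:
\[
\intl_{\Ga_\La} e^{\be B|\ga|}\,\la_{z\si}(d\ga) \;=\; \exp\bigl(ze^{\be B}|\La|\bigr)\;<\;\infty.
\]
Since $Z_\La(z,\be)\geq 1$ (the empty-configuration term alone contributes $1$) and the prefactor $z^{|\eta|}/Z_\La(z,\be)$ does not depend on $a$, Lebesgue's dominated convergence theorem yields $R^{\La}(\eta;z,\be,a)\to 0$ as $a\to 0$, as claimed.

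The only conceptually nontrivial step is the reduction of the non-empty part of the partition of unity to the single factor $1-\w\chi_-^{\La}$; once this is noticed, the finite volume of $\La$ together with plain stability handles everything routinely, and the sharper strong-superstability estimates used elsewhere in the paper play no role in this particular statement.
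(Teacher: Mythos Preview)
Your argument is correct and is genuinely simpler than the paper's. By resumming the non-empty part of the partition of unity you obtain
\[
R^{\La}(\eta;z,\be,a)=\frac{z^{|\eta|}}{Z_\La(z,\be)}\intl_{\Ga_\La}e^{-\be U(\eta\cup\ga)}\bigl(1-\w\chi_-^{\La}(\eta\cup\ga)\bigr)\,\la_{z\si}(d\ga),
\]
and then mere stability of $U$ together with finiteness of $|\La|$ gives an $a$-independent dominating function, so dominated convergence applies. This is clean and needs neither the decomposition $U=\frac12\w U^+ +U^{st}$ nor the strong-superstability growth of $A(a)$.

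The paper proceeds quite differently: it keeps the sum over $X$, uses the splitting \eqref{15}/\eqref{63} and the interaction bound \eqref{lem3-1} to separate off a factor $Z_{\La\sm X}$, applies strong superstability cube by cube to replace the energy by $\sum_{\De\subset X}\bigl(-\tfrac12 A(a)|\ga_\De|^2+(B(a)+\upsilon_*(a))|\ga_\De|\bigr)$, and then shows that the resulting per-cube integral $\epsilon_1(a)$ in \eqref{39-1} decays fast enough (thanks to $A(a)\to\infty$) to beat the combinatorial factor $(1+\epsilon_1(a))^{|\La|/a^d}$. What their approach buys is an explicit quantitative bound of the form \eqref{A12}, and the machinery is the same as that used elsewhere in the paper for the uniform-in-$\La$ estimates. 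For the statement of Lemma~\ref{l.4} as written, however, $\La$ is fixed and no rate is claimed, so your soft dominated-convergence argument is entirely adequate and considerably more transparent.
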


{\it Proof.} See  Appendix.\hfill $\blacksquare$
\begin{lem}\label{l2}
Let the interaction potential $V$  satisfy the assumptions
\textbf{(A)} for two-body  and ${\bf A1-A5}$ for many-body
interactions.  Then for any fixed volume $ \La\in\frak{B}_c(\R^d)$
the following holds:
\begin{equation}\label{61}%(5.8)
\underset{a\rightarrow
0}{\textnormal{lim}}\frac{Z_\Lambda^{(-)}(z, \beta,
a)}{Z_\Lambda(z, \beta)}=1.
\end{equation}
\end{lem}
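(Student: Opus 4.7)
The plan is to deduce this lemma as the $\eta = \emptyset$ specialization of the decomposition \eqref{58}, invoking Lemma \ref{l.4} to dispose of the remainder term. This avoids any independent combinatorial estimate on the partition functions themselves.

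First I would note that $\emptyset \in \Gamma^{(0)} \subset \Gamma_0$ and that the definitions \eqref{27}, \eqref{28}, \eqref{24-1} at $\eta = \emptyset$ collapse the numerator integral onto the denominator, so
\[
\rho_\La(\emptyset; z, \beta) \,=\, 1, \qquad \rho_\La^{(-)}(\emptyset; z, \beta, a) \,=\, 1.
\]
The identity \eqref{58} was derived for an arbitrary configuration $\eta \in \Gamma_0$, and its $X = \emptyset$ term in the expansion \eqref{57} equals $(Z_\La^{(-)}/Z_\La)\,\rho_\La^{(-)}(\emptyset;z,\beta,a)$, because $\widetilde\chi_-^{\La(a)}(\ga) = \prod_{\Delta \subset \La(a)} \chi_-^\Delta(\ga)$ is precisely the indicator absorbed into $\la^a_{z\si}(d\ga)$ in \eqref{24-1} when $\eta = \emptyset$. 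Thus \eqref{58} at $\eta = \emptyset$ specializes to
\[
1 \,=\, \frac{Z_\La^{(-)}(z, \beta, a)}{Z_\La(z, \beta)} \,+\, R^{\La}(\emptyset; z, \beta, a).
\]

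Next, I would apply Lemma \ref{l.4} at $\eta = \emptyset$; its hypothesis covers any $\eta \in \Gamma_0$, so $\lim_{a \to 0} R^{\La}(\emptyset; z, \beta, a) = 0$. Passing to the limit in the displayed identity then yields \eqref{61}.

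There is no substantial technical obstacle, since the content of the present lemma is already packaged inside Lemma \ref{l.4}. The only point worth verifying carefully is the factorization identifying the $X = \emptyset$ term, and this is immediate from the definitions \eqref{24} and \eqref{24-1}. A conceivable alternative route would be to estimate $Z_\La - Z_\La^{(-)}$ directly via strong superstability (the configurations violating the dilute constraint incur unbounded repulsion as $a \to 0$), but this would essentially reproduce the argument of Lemma \ref{l.4}, so it is cleaner to cite that lemma.
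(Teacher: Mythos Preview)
Your argument is correct, but it differs from the paper's own proof. The paper obtains \eqref{61} by a squeeze: the trivial inequality $Z_\Lambda^{(-)}(z,\beta,a) \leq Z_\Lambda(z,\beta)$ (the approximated partition function integrates over fewer configurations) gives the upper bound $1$ on the ratio, while the lower bound $\liminf_{a\to 0} Z_\Lambda^{(-)}/Z_\Lambda \geq 1$ is imported from the earlier papers \cite{RT07}, \cite{Pe08}, where it underpinned the convergence of the approximated pressure. Your route instead specializes the identity \eqref{58} to $\eta=\emptyset$ and invokes Lemma~\ref{l.4}; since $\Lambda_\eta=\emptyset$ in that case, only the $R_1^\Lambda$ contribution in the Appendix survives, and the estimate \eqref{A63} still yields the needed decay. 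The paper's approach is shorter and keeps Lemma~\ref{l2} independent of the more technical Lemma~\ref{l.4}; your approach has the advantage of being self-contained within the present article, avoiding the external citations, and of revealing that Lemma~\ref{l2} is in fact a special case of Lemma~\ref{l.4} rather than a logically separate ingredient.
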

{\it Proof.} In the articles  \cite{RT07} and \cite{Pe08}  the
following estimate was  obtained:
\[\underset{a\rightarrow
0}{\textnormal{lim}}\;\frac{Z_\Lambda^{(-)}(z, \beta,
a)}{Z_\Lambda(z, \beta)}\geq 1,\] on which the proof of the fact
that the pressure of approximated system converges to the
pressure of the real system is based. From the other hand in
accordance with the definitions \eqref{23}, \eqref{24} it is clear
that

\[\frac{Z_\Lambda^{(-)}(z, \beta, a)}{Z_\Lambda(z, \beta)}\leq 1.\]
As a result we have

\[\underset{a\rightarrow
0}{\textnormal{lim}}\frac{Z_\Lambda^{(-)}(z, \beta,
a)}{Z_\Lambda(z, \beta)}=1.\]

\section{\bf   Appendix }

{\it Proof of the  lemma 5.1}

Using \eqref{15} for two-body potential and \eqref{63} for
many-body interaction, one can rewrite \eqref{59} in such a  way:
\begin{align}\label{A3}%(A.1)
 R^{\Lambda}(\eta;z,\beta, a)&= \;\frac{z^{|\eta|}}{Z_\La(z,\beta)}
\underset{\emptyset \neq X \subseteq
\Lambda}{\sum}\int_{\Ga_\La}e^{-\beta (\frac{1}{2}\w{U}^+(\eta\cup
 \gamma_X)+U^{st}(\eta\cup
 \gamma_X))}\widetilde\chi_+^{X}(\eta\cup\ga)\times \notag \\
 &e^{-\beta W(\eta \cup \gamma_X; \gamma_{\Lambda\setminus
 X})-\frac{1}{2}\beta\w{U}^+(\eta\cup
 \gamma_X)}e^{-\beta U(\gamma_{\Lambda\setminus X})}\widetilde\chi_-^{\Lambda \setminus X}(\eta\cup\ga)\la_{z\si}(d\ga).
\end{align}

Using infinite divisibility property of Lebesgue-Poisson  measure,
 the   estimate: \[ e^{-\beta W(\eta \cup \gamma_X; \gamma_{\Lambda\setminus
 X})-\frac{1}{2}\beta\w{U}^+(\eta\cup
 \gamma_X)}\leq e^{\beta\upsilon_*(a)(|\eta|+|\gamma_X|)}\]

  and the fact that\[\widetilde \chi_-^{\Lambda \setminus X}(\eta \cup\ga)\leq 1,\]
where $\upsilon_*(a)=\upsilon_0(a)$ for two-body potential and
 $\upsilon_*(a)=\bar{I}(a)$ for many-body interaction (see.
 \eqref{lem3-1}),
   we obtain from
 \eqref{A3}:
\begin{align}\label{A4}%(A.2)
 R^{\Lambda}(\eta;z,\beta, a)&\leq \;\frac{({z e^{\beta\upsilon_*(a)})^{|\eta|}}}{Z_\La(z,\beta)}
 \underset{\emptyset \neq X \subseteq \Lambda}{\sum}\int_{\Ga_X}e^{-\beta (\frac{1}{2}\w{U}^+(\eta\cup
 \gamma_X)+U^{st}(\eta\cup
 \gamma_X)+\upsilon_*(a)|\gamma_X|)} \times \notag\\
 &\times\widetilde\chi_+^{X}(\eta\cup\ga)\la_{z\si}(d\ga_{ X})
 \int_{\Gamma_{\Lambda\setminus X}}e^{-\beta U(\gamma_{\Lambda\setminus X})}\la_{z\si}(d\ga_{\Lambda\setminus X}).
\end{align}
Let us take into account that \[Z_{\Lambda\setminus X }(z, \beta
)=\int_{\Gamma_{\Lambda\setminus X}}e^{-\beta
U(\gamma_{\Lambda\setminus X})}\la_{z\si}(d\ga_{\Lambda\setminus
X})\]
 and
$Z_{\Lambda\setminus X }(z, \beta )\leq Z_{\Lambda }(z, \beta )$.

Then we have from \eqref{A4}:
\begin{equation}\label{A5}%(A.3)
 R^{\Lambda}(\eta;z,\beta, a)\leq \;({z
 e^{\beta\upsilon_*(a)})^{|\eta|}}
 \underset{\emptyset \neq X \subseteq \Lambda}{\sum}\int_{\Ga_X}e^{-\beta (\frac{1}{2}\w{U}^+(\eta\cup
 \gamma_X)+U^{st}(\eta\cup
 \gamma_X)+\upsilon_*(a)|\gamma_X|)}\widetilde\chi_+^{X}(\eta\cup\ga)\la_{z\si}(d\ga_{
 X}).
\end{equation}
Let $\La_\eta$ be a union of cubs which contain points from the
configuration $\eta$. Then using Proposition 3.1, lemma 3.1 and
inequalities \eqref{9},\eqref{9-1} we have:
\begin{equation}\label{A6}%(A.4)
 R^{\Lambda}(\eta;z,\beta, a)\leq \;({z
 e^{\beta(\upsilon_*(a)+B(a))})^{|\eta|}}( R_1^\Lambda+R_2^\Lambda),
\end{equation}
where
\begin{equation*}\label{A61}
R_1^\Lambda= \underset{\emptyset \neq X \subseteq (\Lambda
\setminus\Lambda_\eta)}{\sum}\int_{\Ga_X}e^{ \underset{\Delta\in
X} {\sum}\; \beta \left(-\frac{1}{2}A(a)|\gamma_{\Delta}|^2 +
(B(a)+\upsilon_*(a)) |\gamma_\Delta|\right)}
\widetilde\chi_+^{X}(\eta\cup\ga)\la_{z\si}(d\ga_{
 X}),
\end{equation*}
\begin{equation*}\label{A62}
R_2^\Lambda=  \underset{ \substack{\emptyset \neq X \subseteq
\Lambda,\\X\cap\Lambda_\eta\neq \emptyset} }{\sum}\int_{\Ga_X}e^{
\underset{\Delta\in (X } {\sum}\; \beta
\left(-\frac{1}{2}A(a)(|\gamma_{\Delta}|+|\eta_\Delta|)^2 +
(B(a)+\upsilon_*(a)) |\gamma_\Delta|\right)}
\widetilde\chi_+^{X}(\eta\cup\ga)\la_{z\si}(d\ga_{
 X})
\end{equation*}
with $A(a)$ and $B(a)$ as in \eqref{218} for two-body potentials
and \eqref{9-1} for many-body interaction. Using  again the
infinite divisible property of the Lebesgue-Poisson measure and
its definition one can calculate
\begin{align}\label{39-1}%(A.5)
 &\int_{\Gamma_{\Delta}}\,
e^{-\beta\frac{1}{2}\,A(a)\,|\gamma_\Delta|^2+\beta\,(B(a)+\upsilon_*(a)
)
\,|\gamma_\Delta|}\,\chi_+^\Delta(\gamma_\Delta)\,\lambda_{z\sigma}(d\gamma_{\Delta})=\\
&=\;\sum\limits_{n=2}^{\infty}\,\frac{(a^d\,z)^n}{n!}\,
e^{-\frac{1}{2}\beta\,A(a)\,n^2+\beta\,(B(a)+\upsilon_*(a))\,n}\,\leq\,\epsilon_1
(a),\notag
\end{align}
where
\begin{equation}\label{39}%(A.6)
 \epsilon_1(a)\;\;\rightarrow\;\; 0,\;\;\text{якщо}\;\; a\;\rightarrow\;\;0.
\end{equation}
Then after summing w.r.t. $X$ we obtain the following estimate:
\begin{equation}\label{A63}%(A.7)
R_1^\Lambda \leq \;
(1+\epsilon_1(a))^{\frac{|\Lambda\setminus\Lambda_\eta|}{a^d}}-1\leq
\epsilon_1(a)\frac{|\Lambda\setminus\Lambda_\eta|}{a^d}
(1+\epsilon_1(a))^{\frac{|\Lambda\setminus\Lambda_\eta|}{a^d}-1}.
\end{equation}
To estimate $R_2^\La$ let us rewrite it in the form:

\begin{align}\label{A7}%(A.8)
 R_2^\Lambda =\;&
 \underset{ \substack{\emptyset \neq X \subseteq
\Lambda,\\X\cap\Lambda_\eta\neq \emptyset}
}{\sum}R_0^\Lambda(\eta_{X \cap \Lambda_\eta};z,\beta, a)\times\\
 &\times\int_{\Ga_{X \setminus \Lambda_\eta}}e^{
 \underset{\Delta\subset
(X \setminus \Lambda_\eta)} {\sum}\; \beta
\left(-\frac{1}{2}A(a)|\gamma_{\Delta}|^2 + (B(a)+\upsilon_*(a))
|\gamma_\Delta|\right)}\widetilde\chi_+^{X
\setminus\Lambda_\eta}(\ga_{X\setminus \Lambda_\eta
})\la_{z\si}(d\ga_{
 X\setminus\Lambda_\eta}),\notag
\end{align}
where
\begin{align}\label{A8}%(A.9)
  &R_0^\Lambda=\int_{\Ga_{X\cap \Lambda_\eta}}e^{
  \underset{\Delta\subset
 X \cap \Lambda_\eta } {\sum}\;\beta
\left(-\frac{1}{2}A(a)(|\eta_{\Delta}|+|\gamma_{\Delta}|)^2 +
(B(a)+\upsilon_*(a)) |\gamma_\Delta|\right)}
\widetilde\chi_+^{X\cap \Lambda_\eta}(\eta\cup\ga_{X\cap
\Lambda_\eta })\la_{z\si}(d\ga_{
 X\cap \Lambda_\eta})=\notag \\
&= \underset{\Delta \in X \cap
\Lambda_\eta}{\prod}\int_{\Gamma_\Delta}e^{\beta
\left(-\frac{1}{2}A(a) (|\eta_{\Delta}|+|\gamma_{\Delta}|)^2 +
(B(a)+\upsilon_*(a)) |\gamma_\Delta|\right)}
\chi_+^{\Delta}(\eta_\Delta\cup\gamma_\Delta)\la_{z\si}(d\ga_{
 \Delta}).
\end{align}
Estimating maximum of the exponent we obtain:
\begin{align}\label{A9}%A9
R_0^\Lambda(\eta_{X\cap\La_\eta};z,\beta, a)\leq
e^{-\beta\left(2A(a)-B(a)-\upsilon_*(a)\right)}\underset{\Delta
\subset
 X \cap
\Lambda_\eta}{\prod}\int_{\Gamma_\Delta} \la_{z\si}(d\ga_{
 \Delta})\leq e^{-\beta\left(2A(a)-B(a)-\upsilon_*(a)\right)}e^{z a^d |\eta|}
\end{align}

Using \eqref{A7}, \eqref{A9}  we can  estimate $R_2^\Lambda$ from
above in the form:
\begin{align}\label{A10}%(A.10)
 &R_2^{\Lambda}\leq e^{-\beta\left(2A(a)-B(a)-\upsilon_*(a)\right)}e^{z a^d
 |\eta|}\times \notag \\
 &\underset{ \substack{\emptyset \neq X \subseteq
 \Lambda,\\X\cap\Lambda_\eta\neq \emptyset}
}{\sum}\;\;\;\int_{\Ga_{X \setminus \Lambda_\eta}}e^{
\underset{\Delta\subset (X \setminus \Lambda_\eta) } {\sum}\;
\beta \left(-\frac{1}{2}A(a)|\gamma_{\Delta}|^2 +
(B(a)+\upsilon_*(a)) |\gamma_\Delta|\right)} \widetilde\chi_+^{X
\setminus\Lambda_\eta}(\ga_{X\setminus \Lambda_\eta
})\la_{z\si}(d\ga_{
 X\setminus\Lambda_\eta}).
\end{align}
Let us take into account that for any
$\frak{B}(\Ga_\La)$-measurable function $F(\gamma)$ the following
holds:
 \begin{align}
&\underset{ \substack{\emptyset \neq X \subseteq
\overline{\Delta}_a\cap \Lambda,\\X\cap\Lambda_\eta\neq \emptyset}
}{\sum}\;\;\;\int_{\Ga_{X \setminus
\Lambda_\eta}}F(\gamma_{X\setminus \Lambda_\eta})\la_{z\si}(d\ga_{
 X\setminus \Lambda_\eta})\leq \notag \\
 &(2^{|\eta|}-1)
\underset{ X \subseteq \overline{\Delta}_a\cap \Lambda \setminus
\Lambda_\eta  }{\sum}\;\;\;\int_{\Ga_{X
}}F(\gamma_X)\la_{z\si}(d\ga_{
 X})
\end{align}
Using this fact and infinite divisibility property of
Lebesgue-Poisson measure we obtain from \eqref{A10}:
\begin{align}\label{A11}%(A.11)
 &R_2^{\Lambda}\leq e^{-\beta\left(2A(a)-B(a)-\upsilon_*(a)\right)}e^{a^d |\eta|}
 (2^{|\eta|}-1)
\times \notag \\
&\underset{ X \subseteq  \Lambda \setminus \Lambda_\eta
}{\sum}\;\underset{\Delta \subset
X}{\prod}\;\int_{\Gamma_\Delta}e^{\beta \left(-\frac{1}{2}A(a)
|\gamma_{\Delta}|^2 + (B(a)+\upsilon_*(a))
|\gamma_\Delta|\right)}\times \notag
\\
&\chi_+^{\Delta}(\ga_{\Delta})\la_{z\si}(d\ga_{
 \Delta})\leq e^{-\beta\left(2A(a)-B(a)-\upsilon_*(a)\right)}e^{z a^d |\eta|}
 (2^{|\eta|}-1) \left(1+\epsilon_1(a) \right)^{\frac{|\Lambda\setminus\Lambda_\eta|}{a^d}}.
\end{align}
It follows from \eqref{A6}, \eqref{A63}, \eqref{A11} that:
\begin{align}\label{A12}%(A12)
&R^{\Lambda}(\eta;z,\beta, a)\leq (z e^{\beta
(B(a)+\upsilon_*(a))})^{|\eta|}\left(1+\epsilon_1(a)\right)^{\frac{|\Lambda\setminus
\Lambda_\eta
|}{a^d}-1}\biggl(\epsilon_1(a)\frac{|\Lambda\setminus\Lambda_\eta|}{a^d}+\notag \\
&(2^{|\eta|}-1)(1+\epsilon_1(a))e^{-\beta\left(2A(a)-B(a)-\upsilon_*(a)\right)}e^{z
a^d |\eta|}\biggl)\rightarrow 0, \textnormal{якщо}\; a \rightarrow
0,
\end{align}

This is the end of the proof.

$$
\mspace{675mu}\blacksquare
$$

\end{document}